\documentclass[12pt,twoside]{article}
\usepackage[mathscr]{eucal}
\usepackage{amsthm,amsmath,amssymb,amscd}
\usepackage{graphicx}
\usepackage{latexsym,enumerate}
\usepackage{color}
\usepackage{tikz}
\usetikzlibrary{shapes.arrows,chains,positioning}
\usepackage[outline]{contour}
\usepackage{caption}
\usepackage{subcaption}
\newtheorem{theorem}{Theorem}
\newtheorem{lemma}{Lemma}
\newtheorem{proposition}{Proposition}
\newtheorem{corollary}{Corollary}
\newtheorem{condition}{Planarity Condition}

\theoremstyle{definition}

\newtheorem*{remark}{Remark}

\usepackage{color,hyperref}
\definecolor{dark-blue}{rgb}{0.15,0.15,0.4}
\definecolor{dark-red}{rgb}{0.4,0.15,0.15}
\definecolor{medium-red}{rgb}{0.6,0,0}
\definecolor{medium-blue}{rgb}{0,0,0.6}
\hypersetup{
    colorlinks=true,
    linktoc=all,
    citecolor=medium-red,
    filecolor=medium-blue,
    linkcolor=medium-blue,
    urlcolor=medium-blue,
}


\begin{document}
\title{Planarity conditions and four-body central configurations equations with angles as coordinates}
\author{Manuele Santoprete\thanks{ Department of Mathematics, Wilfrid Laurier
University E-mail: msantopr@wlu.ca}} 
 \maketitle

\begin{abstract}
    We discuss several conditions for four points to lie on a plane, and we use them to find new equations for four-body central configurations that use angles as variables. We use these equations to give novel proofs of some results for  four-body central configuration. We also give a clear geometrical  explanation of why Ptolemy's theorem can be used to write equations for co-circular central configurations when  mutual distances are used as coordinates. 
\end{abstract}

\renewcommand{\thefootnote}{\alph{footnote})}
\tableofcontents

\section{Introduction}
Let $ P _1,P _2, P _3$, and $P _4 $ be  four points in $\mathbb{R}^3$ with position vectors $\mathbf{q} _1 , \mathbf{q} _2 , \mathbf{q} _3  $, and $ \mathbf{q} _4 $, respectively (see figure \ref{fig:tetrahedron}). Let $ r _{ ij } = \| \mathbf{q} _i - \mathbf{q} _j \| $, be the distance between the point $ P _i $ and $ P _j $, and let $ \mathbf{q} = (\mathbf{q} _1 , \mathbf{q} _2 , \mathbf{q} _3 , \mathbf{q} _4) \in \mathbb{R}  ^{ 12 }$.
The center of mass of the system is $ \mathbf{q} _{ CM } = \frac{ 1 } { m ' } \sum _{ i = 1 } ^n m _i \mathbf{q} _i $, where $ m ' = m _1 + \ldots m _n $ is the total mass. 
The Newtonian $4$-body problem concerns the motion of $4$ particles
with masses $m_i\in{\mathbb R}^+$ and positions $\mathbf{q} _i\in{\mathbb
R}^3$, where $i=1,\ldots,4$. The motion is governed by Newton's
law of motion 
\begin{equation} 
 m _i\mathbf{\ddot q} _i=   \sum _{ i \neq j } \frac{m _i m _j (\mathbf{q} _j - \mathbf{q} _i)   } { r _{ ij } ^3 }=\frac{\partial  U}{  \partial  \mathbf{q}  _i}, \quad 1\leq i\leq 4
\end{equation} 
where $U(\mathbf{q} )$ is the Newtonian potential
\begin{equation} 
U( \mathbf{q} )=\sum_{i<j}\frac{m_im_j}{r_{ij}},\quad 1\leq i\leq 4.
\end{equation} 
A {\it central configuration} (c.c.) of the four-body problem is a configuration $ \mathbf{q} \in \mathbb{R} ^{ 12} $ which satisfies the algebraic equations
\begin{equation} \label{eqn:cc1}
 \lambda\, m _i (\mathbf{q} _i -\mathbf{q} _{ CM })  = \sum _{ i \neq j } \frac{ m _i m _j (\mathbf{q} _j - \mathbf{q} _i) } { r _{ ij } ^3 }, \quad 1 \leq i \leq n .
 \end{equation} 
If we let $ I (\mathbf{q}) $ denote the moment of inertia, that is,
\[ I (\mathbf{q}) = \frac{1}{2} \sum _{ i = 1 } ^n m _i \| \mathbf{q} _i - \mathbf{q} _{ CM } \| ^2 = \frac{ 1}{2 m'} \sum _{ 1 \leq i < j \leq n } ^n  m _i m _j r _{ ij } ^2, 
\]
    we can write equations \eqref{eqn:cc1} as 
    \begin{equation}\label{eqn:cc3}  \nabla U (\mathbf{q}) = \lambda \nabla I (\mathbf{q}).  
    \end{equation}
   Viewing $ \lambda $  as a Lagrange multiplier, a central configuration is simply a critical point of $U $
subject to the constraint $I$ equals a constant.

A central configuration is {\it planar} if the four points $ P _1,P _2, P _3$, and $P _4 $  lie on the same plane. Equations \eqref{eqn:cc1}, and \eqref{eqn:cc3}  also describe planar central configurations provided $ \mathbf{q} _i \in \mathbb{R}^2  $ for $ i = 1, \ldots 4 $.

Other equations for the planar  central configurations of four bodies were given by Dziobek \cite{dziobek1900uber}.
Such equations are written in  term of mutual distances, and  to obtain them, one must  include a condition ensuring that the configuration is planar. This is usually done by using the variational approach of Dziobek (good references for this approach are  \cite{dziobek1900uber,schmidt2002central,perez2007convex}, and \cite{hampton2014relative} for the vortex case) and the following planarity condition
\begin{condition}
 $ P _1,P _2, P _3, P _4 \in \mathbb{R}^3$  are coplanar (in the same plane) if and only if the volume of the tetrahedron formed by  these four points is $ 0 $ .
\end{condition}
To use this condition explicitly  one typically sets the Cayley-Menger determinant 
\[e _{ CM }  = \begin{vmatrix}
            0 & 1 & 1 & 1 & 1 \\
            1 & 0 & r^2 _{ 12 } & r^2 _{ 13 } & r^2 _{ 14 }  \\
            1 & r ^2_{ 12 } & 0 & r^2 _{ 23 } & r^2 _{ 24 } \\
            1 & r ^2_{ 13 } & r^2 _{ 23 } & 0 & r^2 _{ 34 } \\
            1 & r^2 _{ 14 } & r^2 _{ 24 } & r^2 _{ 34 } & 0
        \end{vmatrix} 
\]
to zero. 
Other authors (see for example \cite{albouy2006mutual,albouy2008symmetry,moeckel2001generic}), however, derive Dziobek's equations  using another approach based on  another  planarity condition:

\begin{condition}
 The dimension of the configuration determined by the points $ P _1,P _2, P _3, P _4 \in \mathbb{R}^3$  is less or equal $ 2 $  if and only if there is a non zero vector $ A = (A _1 , A _2 , A _3 , A _4) $ such that.  
 \begin{align*}
    A _1 + A _2 + A _3 + A _4 & = 0\\
   A _1 
   \mathbf{q}  _1 + A _2 
   \mathbf{q}  _2 + A _3 
   \mathbf{q}  _3 + A _4 
   \mathbf{q}  _4 & = 0.
 \end{align*} 
 Moreover, the dimension of the configuration is $ 2 $ if and only if $A$ is unique up to a constant factor. 
\end{condition}

The main purpose of this paper is to describe in detail some lesser known planarity conditions and apply them to recover some known results. We also want to give a geometrical explanation of the constraints used by Cors and Roberts \cite{cors2012four} to obtain co-circular c.c.'s. In Section 2 we will study three more planarity conditions (i.e.,Planarity Condition 3, 4, and 5). In Section 3 using one of the lesser known planarigy conditions (i.e.,Planarity Condition 4), we obtain some new equations for four-body c.c's in terms of angles. A different set of c.c's equations in terms of angles was first obtained by  Saari  \cite{saari2005collisions}. The  equations obtained by Saari  use  Planarity Condition 3 and thus are only suited to describe concave configurations. 
In Section 4 we present some applications of the  equations obtained is Section 3 to prove some known results. In particular, we  give a different proof  of Lemma 3.2 in \cite{cors2012four}, namely we show that if two pairs of masses are equal then the corresponding co-circular configuration is an isosceles trapezoid. We also use the new equations to give a proof of  Lemma 2.5  in \cite{hampton2014relative}.
In Section 5 we give a geometrical explanation of why the equations for four-body co-circular central configuration with distances as variables can be obtained  using Ptolemy's theorem as a constraint. 
\section{More Planarity Conditions}
In this section we will explore some lesser known planarity conditions from a purely geometric point of view. In Sections, 3, 4, and 5  we will investigate their applications to the four body problem. 
Let $ P _1,P _2, P _3$, and $P _4 $ be  four points in $\mathbb{R}^3$ and let  $\mathbf{q} _1 , \mathbf{q} _2 , \mathbf{q} _3  $, and $ \mathbf{q} _4 $ be their position vectors.  Let  
\begin{align*} 
    \mathbf{a}&  = \mathbf{q}  _2 - \mathbf{q}  _1 ,~ 
    \mathbf{b}  = \mathbf{q}  _3 - \mathbf{q}  _2 , ~
     \mathbf{c}   = \mathbf{q}  _4 - \mathbf{q}  _3 , \\
     \mathbf{d}  & = \mathbf{q}  _1 - \mathbf{q}  _4 , ~ 
     \mathbf{e}  = \mathbf{q}  _3 - \mathbf{q}  _1 , ~
    \mathbf{f}  = \mathbf{q}  _4 - \mathbf{q}  _2,    
\end{align*} 
then it follows that  $ \mathbf{a} + \mathbf{b} + \mathbf{c} + \mathbf{d}= 0 $,  $ \mathbf{f} = \mathbf{b} + \mathbf{c} $, and $ \mathbf{e} = \mathbf{a} + \mathbf{b} $, see figure \ref{fig:tetrahedron}.

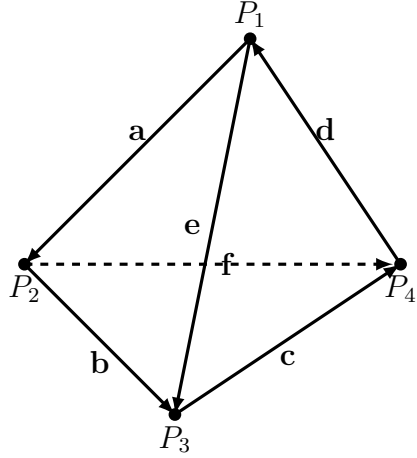
\begin{figure}[t]\begin{center}
\begin{tikzpicture}[thick]
\clip (-2.5,-2.5) rectangle (3.5,3.6);
\draw[very thick,color=black,>=latex,->] (1,3) --   (-2,0)node[midway,above] {$ \mathbf{a}$  };
\draw[very thick,color=black,>=latex,->] (-2,0) -- (0,-2) node[midway,below] {$\mathbf{b}$ };
\draw[very thick,color=black,>=latex,<-] (3,0) -- (0,-2) node[midway,below] {$\mathbf{c}$ };

\draw[very thick,color=black,>=latex,<-] (1,3) -- (3,0) node[midway,above] {$\mathbf{d}$ };
\draw[very thick,color=black,>=latex,->] (1,3) -- (0,-2) node[midway,left] {$\mathbf{e}$ };
\draw[very thick,dashed,color=black,>=latex,->] (-2,0) -- (2.9,0) node[midway,right] {$\mathbf{f}$ };
\draw[color=black,fill=black] (1,3) circle (0.07)  node[above] {$ P _1 $ };
\draw[color=black,fill=black] (-2,0) circle (0.07)  node[below] {$ P_2 $ };
\draw[color=black,fill=black] (0,-2) circle (0.07)  node[below] {$ P _3 $ };
\draw[color=black,fill=black] (3,0) circle (0.07)  node[below] {$ P _4 $ };


\end{tikzpicture}
\end{center}
\caption{The points $ P _1, P _2 , P _3$ , and $P _4$  form a tetrahedron in $ \mathbb{R}  ^3 $.\label{fig:tetrahedron}}
\end{figure}

\begin{lemma}\label{lemma:ef}
    Let $ \Delta = \frac{1}{2} \| \mathbf{e} \times \mathbf{f} \| $, then, with the above definitions, the following equation holds
    \[
        4\Delta  ^2 = e ^2 f ^2 - \frac{1}{4} (b ^2 + d ^2 - a ^2 - c ^2 ) ^2,
    \]
    or 
    \[
        \Delta ^2  = (s -a) (s - b) (s - c) (s - d) - \frac{1}{4} (a c + b d + ef)  (ac + b d - e f)       
    \]
\end{lemma}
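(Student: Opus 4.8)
The plan is to reduce everything to Lagrange's identity
$$\|\mathbf{e}\times\mathbf{f}\|^2 = \|\mathbf{e}\|^2\|\mathbf{f}\|^2 - (\mathbf{e}\cdot\mathbf{f})^2,$$
so that $4\Delta^2 = e^2 f^2 - (\mathbf{e}\cdot\mathbf{f})^2$ and the whole content of the first formula becomes the evaluation of the single dot product $\mathbf{e}\cdot\mathbf{f}$ in terms of the side lengths.

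To compute $\mathbf{e}\cdot\mathbf{f}$ I would use the relations $\mathbf{e} = \mathbf{a}+\mathbf{b}$, $\mathbf{f}=\mathbf{b}+\mathbf{c}$, $\mathbf{d} = -(\mathbf{a}+\mathbf{b}+\mathbf{c})$ recorded before the lemma. Polarizing $e^2 = \|\mathbf{a}+\mathbf{b}\|^2$, $f^2 = \|\mathbf{b}+\mathbf{c}\|^2$ and $d^2 = \|\mathbf{a}+\mathbf{b}+\mathbf{c}\|^2$ expresses the three inner products $\mathbf{a}\cdot\mathbf{b}$, $\mathbf{b}\cdot\mathbf{c}$, $\mathbf{a}\cdot\mathbf{c}$ in terms of $a^2,b^2,c^2,d^2,e^2,f^2$; substituting these into $\mathbf{e}\cdot\mathbf{f} = \mathbf{a}\cdot\mathbf{b} + \mathbf{b}\cdot\mathbf{c} + \mathbf{a}\cdot\mathbf{c} + b^2$ makes the $e^2,f^2$ contributions cancel and leaves $\mathbf{e}\cdot\mathbf{f} = \tfrac12(b^2 + d^2 - a^2 - c^2)$. (This is just the four-point polarization identity $2(\mathbf{q}_3-\mathbf{q}_1)\cdot(\mathbf{q}_4-\mathbf{q}_2) = r_{23}^2 + r_{14}^2 - r_{12}^2 - r_{34}^2$, which one could invoke directly instead.) Plugging this into $4\Delta^2 = e^2 f^2 - (\mathbf{e}\cdot\mathbf{f})^2$ gives the first displayed identity.

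The second identity is then purely algebraic — it is Bretschneider's form of the quadrilateral area and needs no further geometry. Write $s = \tfrac12(a+b+c+d)$ for the semiperimeter and note that $(ac+bd+ef)(ac+bd-ef) = (ac+bd)^2 - e^2f^2$. Since the first identity gives $\Delta^2 = \tfrac14 e^2f^2 - \tfrac{1}{16}(b^2+d^2-a^2-c^2)^2$, the claimed equation is equivalent, once the two $\tfrac14 e^2 f^2$ terms cancel, to
$$16(s-a)(s-b)(s-c)(s-d) = 4(ac+bd)^2 - (b^2+d^2-a^2-c^2)^2.$$
I would prove this by factoring each side into a product of two differences of squares: on the left, group $(s-a)(s-c)$ and $(s-b)(s-d)$ and use $2s-a-c=b+d$, $2s-b-d=a+c$ to obtain $\big[(b+d)^2-(a-c)^2\big]\big[(a+c)^2-(b-d)^2\big]$; on the right, write the expression as $\big[2(ac+bd)+(b^2+d^2-a^2-c^2)\big]\big[2(ac+bd)-(b^2+d^2-a^2-c^2)\big]$ and complete the square inside each factor to reach exactly the same product.

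All the steps are routine; the only places calling for a little care are choosing which inner products to eliminate when evaluating $\mathbf{e}\cdot\mathbf{f}$ and spotting the symmetric factorizations in the algebraic identity (which become natural once one recognizes Brahmagupta's/Bretschneider's formula). I do not expect a genuine obstacle.
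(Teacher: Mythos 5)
Your proposal is correct and follows essentially the same route as the paper: Lagrange's identity $\|\mathbf{e}\times\mathbf{f}\|^2=e^2f^2-(\mathbf{e}\cdot\mathbf{f})^2$ combined with a polarization computation showing $2\,\mathbf{e}\cdot\mathbf{f}=b^2+d^2-a^2-c^2$. The only difference is that you also verify explicitly the algebraic equivalence of the two displayed forms via the factorization $16(s-a)(s-b)(s-c)(s-d)=\bigl[(b+d)^2-(a-c)^2\bigr]\bigl[(a+c)^2-(b-d)^2\bigr]$, a step the paper's proof leaves implicit; your factorization is correct.
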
 
\begin{proof} 
 Clearly,
\[
    4\Delta ^2  = (\mathbf{e} \times \mathbf{f}) \cdot (\mathbf{e} \times \mathbf{f}) =  (\mathbf{e} \cdot \mathbf{e }) (\mathbf{f} \cdot \mathbf{f}) -  (\mathbf{e} \cdot \mathbf{f} ) ^2 = e ^2 f ^2 -  (\mathbf{e} \cdot \mathbf{f}) ^2.       
\]
But 
\begin{align*}
   2 (\mathbf{e} \cdot \mathbf{f}) & = 2\cdot (\mathbf{a} + \mathbf{b} ) \cdot (\mathbf{b} + \mathbf{c})  = - 2 \mathbf{b} \cdot (\mathbf{c} + \mathbf{d}) + 2 \mathbf{c} \cdot (\mathbf{a} + \mathbf{b})  = 2 \mathbf{a} \cdot \mathbf{c} - 2 \mathbf{b} \cdot \mathbf{d}\\
   &= (\mathbf{a} + \mathbf{c}) \cdot (\mathbf{a} + \mathbf{c}) - \mathbf{a} \cdot \mathbf{a} - \mathbf{c} \cdot \mathbf{c} - (\mathbf{b} + \mathbf{d}) \cdot (\mathbf{b} + \mathbf{d}) + \mathbf{b} \cdot \mathbf{b} + \mathbf{d} \cdot \mathbf{d} \\
   & = -a ^2 + b ^2 - c ^2 + d^2. 
\end{align*}  
Hence, 
\[
    4\,\Delta  ^2 = e ^2  f ^2 - \frac{1}{4} (b ^2 + d ^2 - a ^2 - c ^2 ) ^2.  
\] 
\end{proof}

Suppose  that   $\mathbf{q} _1 , \mathbf{q} _2 , \mathbf{q} _3,\mathbf{q} _4 \in \mathbb{R}  ^2 $, then we say that the configuration is {\it planar}. We say that a planar configuration is {\it degenerate} if two or more points coincide, or if more than two points lie on the same line. 
Non-degenerate planar configurations can be classified as either {\it concave} or {\it convex}  (see Figure \ref{fig:concave}, and \ref{fig:convex}).
A concave configuration has one point which is located strictly inside the convex hull of the other three, whereas a convex configuration does not have a point contained in the convex hull of the other three points. Any  convex configuration determines a convex quadrilateral (for a precise definition  of quadrilateral see for example \cite{behnke1974fundamentals}). {\it Co-circular} central configurations are those planar four-body
c.c.’s which also lie on a common circle.

In a planar convex configuration we say that the points are {\it ordered sequentially} if they are numbered consecutively while traversing the boundary of the corresponding convex  quadrilateral. 
We say that a planar convex configuration is {\it standard} (see Figure \ref{fig:convex}) if it is  a non-degenerate planar convex configuration that is ordered sequentially so that $ r _{ 12 },  r _{ 23 } , r _{ 34 } $ and $ r _{ 14 } $ are the lengths of the exterior sides of the corresponding  quadrilateral and $ r _{ 13 } $ and $ r _{ 24 } $ are the lengths of the diagonals. A {\it standard quadrilateral} is the quadrilateral determined by a standard configuration.

\begin{figure}
\begin{subfigure}{.45\textwidth}
    \centering
\begin{tikzpicture}[thick]
\clip (-2.5,-2.5) rectangle (3.5,3.6);
\draw[very thick,color=black,>=latex,->] (1,3) --   (-2,0)node[midway,above left ] {$ \mathbf{a}$  };
\draw[very thick,color=black,>=latex,->] (-2,0) -- (0.3,1) node[midway,above] {$\mathbf{b}$ };
\draw[very thick,color=black,>=latex,<-] (3,0) -- (0.3,1) node[midway,above] {$\mathbf{c}$ };

\draw[very thick,color=black,>=latex,<-] (1,3) -- (3,0) node[midway,above right] {$\mathbf{d}$ };
\draw[very thick,color=black,>=latex,->] (1,3) -- (0.3,1) node[midway, right] {$\mathbf{e}$ };
\draw[very thick,color=black,>=latex,->] (-2,0) -- (2.9,0) node[midway,below right] {$\mathbf{f}$ };
\draw[color=black,fill=black] (1,3) circle (0.07)  node[above] {$ P _1 $ };
\draw[color=black,fill=black] (-2,0) circle (0.07)  node[below] {$ P_2 $ };
\draw[color=black,fill=black] (0.3,1) circle (0.07)  node[below] {$ P _3 $ };
\draw[color=black,fill=black] (3,0) circle (0.07)  node[below] {$ P _4 $ };


\end{tikzpicture}
\caption{A  planar concave configuration with $ P _3 $ in the convex hull of the remaining points.\label{fig:concave}}
\end{subfigure}
\hskip 0.5 cm
\centering
\begin{subfigure}{.45\textwidth}
    \centering
\begin{tikzpicture}[thick]
\clip (-2.5,-2.5) rectangle (3.5,3.6);
\draw[very thick,color=black,>=latex,->] (1,3) --   (-2,0)node[midway,above] {$ \mathbf{a}$  };
\draw[very thick,color=black,>=latex,->] (-2,0) -- (0,-2) node[midway,below] {$\mathbf{b}$ };
\draw[very thick,color=black,>=latex,<-] (3,0) -- (0,-2) node[midway,below] {$\mathbf{c}$ };

\draw[very thick,color=black,>=latex,<-] (1,3) -- (3,0) node[midway,above] {$\mathbf{d}$ };
\draw[very thick,color=black,>=latex,->] (1,3) -- (0,-2) node[midway,above left] {$\mathbf{e}$ };
\draw[very thick,color=black,>=latex,->] (-2,0) -- (2.9,0) node[midway,below  right] {$\mathbf{f}$ };
\draw[color=black,fill=black] (1,3) circle (0.07)  node[above] {$ P _1 $ };
\draw[color=black,fill=black] (-2,0) circle (0.07)  node[below] {$ P_2 $ };
\draw[color=black,fill=black] (0,-2) circle (0.07)  node[below] {$ P _3 $ };
\draw[color=black,fill=black] (3,0) circle (0.07)  node[below] {$ P _4 $ };


\end{tikzpicture}
\caption{A  standard planar convex configuration.\label{fig:convex}}
\end{subfigure}
\end{figure}

In the case of a planar configuration $ \Delta $ can be interpreted as the area of the quadrilateral with diagonals $ \mathbf{e} $ and $ \mathbf{f} $, and one obtains the classical Bretschneider's formula.
\begin{corollary}[Bretschneider's formula]\label{cor:bret}
Suppose   $ P _1,P _2, P _3$, and $P _4 $ form   a standard planar convex configuration,  then the area   $A$  of the associated  quadrilateral   is 
\[
    A =  \Delta =\frac{1}{2} \sqrt{   e ^2  f ^2 - \frac{1}{4} (b ^2 + d ^2 - a ^2 - c ^2 ) ^2  }.
\]
\end{corollary}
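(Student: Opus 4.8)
The plan is to reduce the statement to Lemma \ref{lemma:ef}: I will show that for a standard planar convex configuration the quantity $\Delta = \frac{1}{2}\|\mathbf{e}\times\mathbf{f}\|$ is exactly the area $A$ of the associated quadrilateral, after which the asserted formula is just the first identity of Lemma \ref{lemma:ef} with $\Delta$ replaced by $A$, followed by taking the positive square root.

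First I would pass to $\mathbb{R}^2$ and introduce the scalar exterior product $\mathbf{u}\wedge\mathbf{v} = u_1 v_2 - u_2 v_1$, which for planar vectors is the $z$-component of $\mathbf{u}\times\mathbf{v}$ and therefore satisfies $|\mathbf{u}\wedge\mathbf{v}| = \|\mathbf{u}\times\mathbf{v}\|$. The key computation is the translation-invariant identity
\[
\frac{1}{2}\,\mathbf{e}\wedge\mathbf{f} = \frac{1}{2}(\mathbf{q}_3-\mathbf{q}_1)\wedge(\mathbf{q}_4-\mathbf{q}_2) = \frac{1}{2}\big(\mathbf{q}_1\wedge\mathbf{q}_2 + \mathbf{q}_2\wedge\mathbf{q}_3 + \mathbf{q}_3\wedge\mathbf{q}_4 + \mathbf{q}_4\wedge\mathbf{q}_1\big),
\]
obtained by expanding the middle expression bilinearly and using antisymmetry (for instance $-\,\mathbf{q}_3\wedge\mathbf{q}_2 = \mathbf{q}_2\wedge\mathbf{q}_3$ and $-\,\mathbf{q}_1\wedge\mathbf{q}_4 = \mathbf{q}_4\wedge\mathbf{q}_1$). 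The right-hand side is precisely the Gauss (shoelace) formula for the signed area of the polygon with vertices $P_1,P_2,P_3,P_4$ in that cyclic order, so no choice of origin is required.

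Next I would invoke the hypothesis that the configuration is \emph{standard}: by definition the points are ordered sequentially around the boundary of a non-degenerate convex quadrilateral, so the shoelace sum computes $\pm A$ with $A>0$ the genuine area of that quadrilateral. Hence $A = \frac{1}{2}|\mathbf{e}\wedge\mathbf{f}| = \frac{1}{2}\|\mathbf{e}\times\mathbf{f}\| = \Delta$. Substituting this into the first equality of Lemma \ref{lemma:ef} gives $4A^2 = e^2 f^2 - \frac{1}{4}(b^2+d^2-a^2-c^2)^2$, and taking the positive square root yields the claimed expression for $A$.

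The only genuinely delicate point is this last step: identifying the absolute value of the shoelace sum with the true area requires that the traversal $P_1\to P_2\to P_3\to P_4$ actually follows the boundary of the convex quadrilateral, which is exactly what the ``ordered sequentially'', hence ``standard'', hypothesis guarantees. For a concave configuration, or a convex one numbered so that $r_{13}$ and $r_{24}$ are not the diagonals, the same algebraic identity still holds but $\frac{1}{2}|\mathbf{e}\wedge\mathbf{f}|$ is no longer the area of the convex hull, so this hypothesis cannot be dropped. Everything else is the routine bilinear expansion above together with Lemma \ref{lemma:ef}.
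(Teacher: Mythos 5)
Your proposal is correct and follows essentially the same route as the paper: both reduce the corollary to Lemma~\ref{lemma:ef} via the identity $A=\tfrac{1}{2}\|\mathbf{e}\times\mathbf{f}\|=\Delta$. The only difference is that you actually verify that identity (via the shoelace expansion of $\mathbf{e}\wedge\mathbf{f}$ and the ``standard'' ordering hypothesis), whereas the paper simply cites it as a known formula for the area of a quadrilateral in terms of its diagonals, justifying it only in a subsequent remark via the Varignon parallelogram.
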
 
\begin{proof}
    Since the area of  a standard  quadrilateral is given by $ A = \frac{1}{2} \| \mathbf{e} \times \mathbf{f} \| =  \Delta $ the proof result follows immediately.
\end{proof}

\begin{remark} 
 Note that the vector formula for the area of a quadrilateral, in our notation  $ A = \frac{1}{2} \| \mathbf{e} \times \mathbf{f} \|$, also holds for concave quadrilaterals with one of the points  in the convex hull of the remaining three points. This is because the area of the Varignon parallelogram (i.e., the parallelogram formed when the midpoints of adjacent sides of a quadrilateral are joined) is one half the area of the quadrilateral. This is true not only for convex quadrilaterals, but also for concave ones. Since the area of the Varignon parallelogram is $ \| \mathbf{e} \times \mathbf{f} \|$ one immediately sees that the area of the quadrilateral is $ A = \frac{1}{2} \| \mathbf{e} \times \mathbf{f} \|$. A consequence of this is that Bretschneider's formula also holds for the concave quadrilaterals. 
 \end{remark}

The  following  condition was used in \cite{saari2005collisions} to derive central configuration equations in terms of angles  for the concave planar four body problem 

\begin{condition}
Let $ P _1,P _2, P _3, P _4 $ be four points in $  \mathbb{R}^3$, the configuration determined by these points is   planar and concave if and only the sum of three of   the  areas of triangle you can make with the four points  add up to the area of the fourth triangle.  
\end{condition}
\begin{proof}
Clearly we have that
\begin{align*}
    \mathbf{a} \times \mathbf{b}&  =- (\mathbf{b} + \mathbf{c}+ \mathbf{d} ) \times \mathbf{b}=
    \mathbf{b} \times \mathbf{c} + \mathbf{b}  \times \mathbf{d} \\  
    & = \mathbf{b} \times \mathbf{c}  -(\mathbf{a} + \mathbf{c} +  \mathbf{d})  \times \mathbf{d}\\
   &  = \mathbf{b} \times \mathbf{c} +  \mathbf{d} \times \mathbf{a} + \mathbf{d} \times \mathbf{c}   . 
\end{align*}
Hence, using the triangle inequality twice, yields 
\begin{align*}
    \| \mathbf{a} \times \mathbf{b} \| &  = 
    \|  \mathbf{b} \times \mathbf{c} +  \mathbf{d} \times \mathbf{a} + \mathbf{d} \times \mathbf{c}  \| \\
   & \leq \|  \mathbf{b} \times \mathbf{c} \|  +  \| \mathbf{d} \times \mathbf{a} + \mathbf{d} \times \mathbf{c}  \| \\
  & \leq  \|  \mathbf{b} \times \mathbf{c} \|  +  \| \mathbf{d} \times \mathbf{a} \| +\|  \mathbf{d} \times \mathbf{c}  \|. 
\end{align*}
Equality holds if and only if $ \mathbf{b} \times \mathbf{c} $, $ \mathbf{d} \times \mathbf{a} $, $ \mathbf{d} \times \mathbf{c} $ and $ \mathbf{a} \times \mathbf{b} $ are collinear. It follows that the four points points are coplanar. 
Let $ A _i $ be the area of the triangle of the triangle containing all points  $ P _j $ with 
$ j \neq i $. Then the condition above can be expressed as 
\[A _4 = A _1 + A _2 + A _3. \]
For a planar configuration this condition holds if and only if the configuration is concave with $ P _4 $ in the convex hull of the other four points. 
Similar conditions hold if one starts from  $ \mathbf{b} \times \mathbf{c} $, $ \mathbf{d} \times \mathbf{a} $, or  $ \mathbf{d} \times \mathbf{c} $
\end{proof}

We now introduce and prove a lesser known planarity condition, which was hinted at in \cite{saari2005collisions}. 
In Section 3 we will use this condition to find new equations for central configurations in terms of angles. 
\begin{condition}\label{cond:planarity4}
With the notations above let   $ A _1 = \frac{1}{2} \| \mathbf{b} \times \mathbf{c}\|  \neq 0 $, $ A _2 = \frac{1}{2} \| \mathbf{c} \times \mathbf{d} \| \neq 0$, $ A _3 = \frac{1}{2}  \| \mathbf{d} \times \mathbf{a} \| \neq 0$ and $ A _4 = \frac{1}{2} \| \mathbf{a} \times \mathbf{b}\|  \neq 0 $, then 
\begin{enumerate}
         \item   The configuration is a  standard planar  convex or concave (with either $ P _1 $ or $ P _3 $ in the convex hull of the other points) configuration if and only if $ \Delta = A _2 + A _4 $. 
    \item The configuration is  a standard planar  convex  or concave (with either $ P _2 $ or $ P _4$ in the convex hull of the other points) configuration if and only if $ \Delta = A _1 + A _3 $.
      
   \item  The configuration is a standard convex planar configuration if and only if $ \Delta = A _1 + A _3 $ and $ \Delta = A _2 + A _4 $.
\end{enumerate} 
\end{condition}

\begin{proof} 
Clearly we have that
\begin{align*}
    \mathbf{e} \times \mathbf{f}&  = (\mathbf{a} + \mathbf{b}) \times (\mathbf{b} + \mathbf{c})=
   (\mathbf{a} + \mathbf{b}) \times \mathbf{b} + (\mathbf{a} + \mathbf{b}) \times \mathbf{c} \\  
    & = \mathbf{a} \times \mathbf{b}  -(\mathbf{c} + \mathbf{d})  \times \mathbf{c} = \mathbf{a} \times \mathbf{b} +  \mathbf{c} \times \mathbf{d}  . 
\end{align*} 

Using the expression for $  \mathbf{e} \times \mathbf{f}  $ above yields,  
\begin{align*}
    4\Delta ^2&  = \| \mathbf{e} \times \mathbf{f} \| ^2 = \| \mathbf{a} \times \mathbf{b} + \mathbf{c} \times \mathbf{d}\| ^2\\
    &     =  \| \mathbf{a} \times \mathbf{b}\| ^2  +\|  \mathbf{c} \times \mathbf{d}\| ^2 + 2 (\mathbf{a} \times \mathbf{b}) \cdot  (\mathbf{c} \times \mathbf{d} ) \\
  & \leq   \| \mathbf{a} \times \mathbf{b}\| ^2  +\|  \mathbf{c} \times \mathbf{d}\| ^2 + 2| (\mathbf{a} \times \mathbf{b}) \cdot  (\mathbf{c} \times \mathbf{d} ) |\\
  &  \leq \| \mathbf{a} \times \mathbf{b}\| ^2  +\|  \mathbf{c} \times \mathbf{d}\| ^2 + 2 \| \mathbf{a} \times \mathbf{b}) \| \|  (\mathbf{c} \times \mathbf{d} )\| \\
  & = ( \| \mathbf{a} \times \mathbf{b}\|   +\|  \mathbf{c} \times \mathbf{d}\|)^2=(A _2 + A _4) ^2  
 \end{align*} 
where the last inequality follows from the Cauchy-Schwarz inequality. 
If $ \Delta = (A _2 + A _4) $ then the Cauchy-Schwarz inequality reduces to an equality. Since $ \| \mathbf{a} \times \mathbf{b} \| \neq 0 $ and $\|  \mathbf{c} \times \mathbf{d} \| \neq 0 $ the equality holds when $ \mathbf{a} \times \mathbf{b} $ and $ \mathbf{c} \times \mathbf{d} $ are parallel. In this case the configuration is planar. Since $ A _1 , A _2 , A _3 , A _4 \neq 0 $, it follows that the configuration is also non-degenerate.  Moreover, we must have  $ (\mathbf{a} \times \mathbf{b}) \cdot  (\mathbf{c} \times \mathbf{d} ) =|(\mathbf{a} \times \mathbf{b}) \cdot  (\mathbf{c} \times \mathbf{d} ) | $. Consequently $ \mathbf{a} \times \mathbf{b} $ and $ \mathbf{c} \times \mathbf{d} $ point in the same direction. This is enough to exclude concave configurations with $ P _2 $ or $ P _4 $ in the convex hull of the remaining masses.  

Similar computations allow us to exclude concave configurations with $ P _1 $ or $ P _3 $ in the convex hull of the remaining masses. We have that

\begin{align*}
    \mathbf{e} \times \mathbf{f}&  = (\mathbf{a} + \mathbf{b}) \times (\mathbf{b} + \mathbf{c})=
    \mathbf{a} \times (\mathbf{b} + \mathbf{c})  + \mathbf{b}  \times (\mathbf{b} + \mathbf{c})  \\   
    & = - \mathbf{a} \times (\mathbf{a} + \mathbf{d})  + \mathbf{b}  \times  \mathbf{c}
    =  \mathbf{d} \times  \mathbf{a}  + \mathbf{b}  \times  \mathbf{c}
. 
\end{align*} 

Using the expression for $  \mathbf{e} \times \mathbf{f}  $ above yields,  
\begin{align*}
    \Delta ^2&  = \| \mathbf{e} \times \mathbf{f} \| ^2 = \| \mathbf{d} \times \mathbf{a} + \mathbf{b} \times \mathbf{c}\| ^2\\
    &     =  \| \mathbf{d} \times \mathbf{a}\| ^2  +\|  \mathbf{b} \times \mathbf{c}\| ^2 + 2 (\mathbf{d} \times \mathbf{a}) \cdot  (\mathbf{b} \times \mathbf{c} ) \\
  & \leq   \| \mathbf{d} \times \mathbf{a}\| ^2  +\|  \mathbf{b} \times \mathbf{c}\| ^2 + 2| (\mathbf{d} \times \mathbf{a}) \cdot  (\mathbf{b} \times \mathbf{c} ) |\\
  &  \leq \| \mathbf{d} \times \mathbf{a}\| ^2  +\|  \mathbf{b} \times \mathbf{c}\| ^2 + 2 \| \mathbf{d} \times \mathbf{a}) \| \|  (\mathbf{b} \times \mathbf{c} )\| \\
  & = ( \| \mathbf{d} \times \mathbf{a}\|   +\|  \mathbf{b} \times \mathbf{c}\|)^2=(A _1 + A _3) ^2.
 \end{align*} 
If $ \Delta = (A _1 + A _3) $ then  $  \mathbf{d} \times \mathbf{a} $ and $ \mathbf{b} \times \mathbf{c} $ are parallel and $ \mathbf{d} \times \mathbf{a} $ and $ \mathbf{b} \times \mathbf{c} $ point in the same direction. This is enough to exclude concave configurations with $ P _1 $ or $ P _3$ in the convex hull of the remaining masses.  

Combining the arguments above, if $ \Delta = A _1 + A _3$ and $ \Delta = A _2 + A _4 $ then the configuration must be a  standard planar  convex configuration. 

Conversely, suppose that the four points  form a standard  convex quadrilateral, with $e$ and $f$ as diagonals. Then, by Corollary \ref{cor:bret} its area is  $  \Delta $ and it equals  the sum of the areas of the triangles $ A _1 $ and $ A _3 $, or of the triangles  $ A _2 $ and $ A _4 $ .  This concludes the proof.  
\end{proof}

We are now ready to introduce the last planarity condition we discuss in this article.  It will be  shown in   Section \ref{sec:cocircular} that  this condition   is  important in connection with the co-circular four body problem (see \cite{cors2012four} for a nice approach to the co-circular four body problem). 
\begin{condition}\label{cond:planarity5}
    Consider a convex configuration of four bodies and  let $ \alpha$ be the angle between the vectors $ \mathbf{a} $ and $ \mathbf{d} $, and $ \beta $ be the angle between $ \mathbf{b} $ and $ \mathbf{c} $. If   $ \gamma = \frac{ \alpha + \beta } { 2  } $, then 
    \begin{equation}\label{eqn:planarity5} 
F =  (a c + b d + ef)  (ac + b d - e f)- 4 abcd \cos ^2 \gamma=0 
 \end{equation} 
if and only if the configuration is a standard convex planar configuration.  
\end{condition} 
\begin{proof} 

Suppose configuration is a standard convex planar configuration, then 
\begin{equation} \label{eqn:planarity_proof}
     \| \mathbf{a} \times \mathbf{d} \| + \| \mathbf{b} \times \mathbf{c} \| = \Delta. 
 \end{equation} 
Since $ \mathbf{f} = \mathbf{b} + \mathbf{c} = - (\mathbf{a} + \mathbf{d})   $ 
we have that 
\begin{align*} 
    \| \mathbf{f} \| ^2 &= (\mathbf{b} + \mathbf{c}) \cdot (\mathbf{b} + \mathbf{c}) = (\mathbf{a} + \mathbf{d}) \cdot (\mathbf{a} + \mathbf{d})\\
    & = \| \mathbf{b} \| ^2 + \| \mathbf{c} \| ^2 + 2 (\mathbf{b} \cdot \mathbf{c} )= \| \mathbf{a} \| ^2 + \| \mathbf{d} \| ^2 + 2( \mathbf{a} \cdot \mathbf{d}),
\end{align*} 
and  thus,
\begin{equation}\label{eqn:proof} \mathbf{b} \cdot \mathbf{c} - \mathbf{a} \cdot \mathbf{d} = \frac{1}{2} (\| \mathbf{a} \| ^2 + \| \mathbf{d} ^2 \|-\| \mathbf{b} \| ^2 - \| \mathbf{c} \| ^2  ) = \frac{1}{2} (a ^2 + d ^2 - b ^2 - c ^2 ) 
\end{equation} 
Squaring and adding equation \eqref{eqn:planarity_proof} and \eqref{eqn:proof} yields:
\begin{equation}\label{eqn:proof2} 
    (\| \mathbf{a} \times \mathbf{d} \| + \| \mathbf{b} \times \mathbf{c} \| )^2 + (\mathbf{b} \cdot \mathbf{c} - \mathbf{a} \cdot \mathbf{d} ) ^2    = \Delta ^2 + \frac{1}{4} (a ^2 + d ^2 - b ^2 - c ^2 ) ^2.
\end{equation} 
We now want to rewrite the left hand side of equation \eqref{eqn:proof2}. Let $ \alpha \in [0, \pi ]$ be the angle between the vectors $ \mathbf{a} $ and $ \mathbf{d} $, and $ \beta \in [0, \pi ] $ be the angle between $ \mathbf{b} $ and $ \mathbf{c} $, then  
\[
\begin{aligned} 
    \| \mathbf{a} \times \mathbf{d} \| & = a d  \sin \alpha \\
    \| \mathbf{b} \times \mathbf{c} \| & = b c   \sin \beta \\
    \mathbf{b} \cdot \mathbf{c} & =  b c \cos  \beta\\
    \mathbf{a} \cdot \mathbf{d} & =  a d \cos  \alpha.
\end{aligned}  
\]

Let  $ \gamma = \frac{ \alpha + \beta } { 2 } $ then,  using these expression, we obtain 
\begin{align*}
& (\| \mathbf{a} \times \mathbf{d} \| + \| \mathbf{b} \times \mathbf{c} \| )^2 + (\mathbf{b} \cdot \mathbf{c} - \mathbf{a} \cdot \mathbf{d} ) ^2  = \\
&  a ^2 d ^2 + b ^2 c ^2 + 2 abcd (\sin \alpha \sin \beta - \cos \alpha \cos \beta)  = \\
&  a ^2 d ^2 + b ^2 c ^2 -2 abcd \cos (\alpha + \beta) =  \\
&   (a  d  + b  c) ^2 - 2abcd( 1 + \cos (\alpha + \beta))= \\
&   (a d + b c) ^2 -4abcd \cos ^2 \gamma. 
\end{align*}
Substituting the previous equation into \eqref{eqn:proof2}, yields
 
\[\Delta ^2  = (a d + b c) ^2 - \frac{1}{4} (a ^2 + d ^2 - b ^2 - c ^2) ^2 -4 abcd \cos ^2 \gamma   \]
or 
\[
    \frac{\Delta ^2}{4} = (s - a) (s - b) (s - c) (s - d) - abcd \cos ^2 \gamma.     
\]
Comparing this with the equation in Lemma \ref{lemma:ef} establishes the planarity condition.

Conversely, if \eqref{eqn:planarity5} holds, following the previous reasoning backwards we find 
 \eqref{eqn:planarity_proof}, from which it follows that the configuration is coplanar and it is either convex or concave with $ P _1 $ or $  P _3 $ in the convex hull of the remaining points. The proof follows since the configuration is convex by hypothesis.  

 \end{proof}
Incidentally, we can use the previous theorem to give a proof of Ptolemy's inequality. 
 \begin{corollary}[Ptolemy's Inequality]
     Suppose the points $ P _1 , \ldots , P _4 $ form a standard planar convex quadrilateral, then 
     \[ac + b d - e f \geq 0 \]
     with equality if and only if the quadrilateral is cyclic. 
 \end{corollary}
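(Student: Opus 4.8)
The plan is to read off Ptolemy's inequality directly from Planarity Condition \ref{cond:planarity5} and its proof, treating the condition as the extremal case of a more general inequality. The key observation is that the argument establishing \eqref{eqn:proof2} did \emph{not} actually use planarity: equation \eqref{eqn:proof} is a pure identity (it only uses $\mathbf{f} = \mathbf{b}+\mathbf{c} = -(\mathbf{a}+\mathbf{d})$, which holds for any configuration), and the only place planarity entered was in replacing the general relation $\Delta \le \|\mathbf{a}\times\mathbf{d}\| + \|\mathbf{b}\times\mathbf{c}\|$ (valid for a standard convex quadrilateral, convex or not, by the triangle-inequality argument in the proof of Planarity Condition \ref{cond:planarity4}) with equality.

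First I would record the general inequality $\Delta \le \|\mathbf{a}\times\mathbf{d}\| + \|\mathbf{b}\times\mathbf{c}\|$ coming from $\mathbf{e}\times\mathbf{f} = \mathbf{d}\times\mathbf{a} + \mathbf{b}\times\mathbf{c}$ and the triangle inequality for norms, so that $2\Delta \le 2(\|\mathbf{a}\times\mathbf{d}\| + \|\mathbf{b}\times\mathbf{c}\|)$. Then, following the computation in the proof of Planarity Condition \ref{cond:planarity5} verbatim but with this inequality in place of \eqref{eqn:planarity_proof}, I obtain
\[
\Delta^2 \le (ad+bc)^2 - \tfrac14(a^2+d^2-b^2-c^2)^2 - 4abcd\cos^2\gamma .
\]
On the other hand, Lemma \ref{lemma:ef} gives the exact value $\Delta^2 = e^2 f^2 - \tfrac14(b^2+d^2-a^2-c^2)^2$, equivalently $\Delta^2 = (ad+bc)^2 - \tfrac14(a^2+d^2-b^2-c^2)^2 - (ac+bd+ef)(ac+bd-ef)$ after the algebra already carried out in Lemma \ref{lemma:ef}. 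Subtracting, the $(ad+bc)^2$ and $\tfrac14(a^2+d^2-b^2-c^2)^2$ terms cancel and one is left with
\[
(ac+bd+ef)(ac+bd-ef) \le 4abcd\cos^2\gamma .
\]
Since $ac+bd+ef > 0$ and $4abcd\cos^2\gamma \ge 0$, this forces $ac+bd-ef \ge -\,\dfrac{4abcd\cos^2\gamma}{\,ac+bd+ef\,}$; but I want the cleaner statement $ac+bd-ef\ge 0$, which follows by noting that if $ac+bd-ef$ were negative then the left side would be negative while the right side is nonnegative — wait, that is automatically consistent, so I need the sharper route: observe $4abcd\cos^2\gamma \le 4abcd$ and compare with the Lemma \ref{lemma:ef} identity evaluated when the configuration is planar, OR simply argue that $ac+bd+ef \le ac+bd+ef$ and use $4abcd\cos^2\gamma \le (ac+bd)^2 - e^2f^2 + (\text{a nonnegative planar defect})$. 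The cleanest honest argument is: for a planar standard convex quadrilateral $\Delta^2 \ge 0$, so from Lemma \ref{lemma:ef}'s second form $(s-a)(s-b)(s-c)(s-d) \ge \tfrac14(ac+bd+ef)(ac+bd-ef)$, and independently Bretschneider/Planarity Condition \ref{cond:planarity5} gives $(s-a)(s-b)(s-c)(s-d) = \tfrac{\Delta^2}{4} + abcd\cos^2\gamma$ (rearranged), whence $(ac+bd+ef)(ac+bd-ef) = 4abcd\cos^2\gamma \ge 0$, and since $ac+bd+ef>0$ we conclude $ac+bd-ef\ge 0$. Equality holds iff $\cos\gamma = 0$, i.e. $\gamma = \pi/2$, i.e. $\alpha+\beta = \pi$, which is exactly the condition that the quadrilateral is cyclic (opposite-angle characterization of cyclic quadrilaterals — note $\alpha$ is the angle at the vertex between sides $\mathbf a,\mathbf d$ and $\beta$ the one between $\mathbf b,\mathbf c$, which are opposite vertices of the standard quadrilateral).

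The main obstacle is bookkeeping rather than depth: one must be careful that the generic (non-planar) chain of inequalities is applied to a \emph{planar} configuration to get equality $(ac+bd+ef)(ac+bd-ef)=4abcd\cos^2\gamma$ cleanly, and one must correctly identify which pair of vertices $\alpha$ and $\beta$ sit at, so that $\alpha+\beta=\pi$ really is the cyclic condition and not some spurious relation. A secondary subtlety is justifying that all of $a,b,c,d$ and $ac+bd+ef$ are strictly positive (true for a non-degenerate standard convex quadrilateral, so that dividing by $ac+bd+ef$ is legitimate and the sign conclusion $ac+bd-ef\ge0$ follows), and checking that the equality case of the triangle inequality for vector norms — automatically satisfied in the planar case — is consistent with $\gamma=\pi/2$; these are routine once the setup is pinned down.
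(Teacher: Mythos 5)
Your final argument is correct and is essentially the paper's own proof: both rest on Planarity Condition \ref{cond:planarity5}, namely the identity $(ac+bd+ef)(ac+bd-ef)=4abcd\cos^2\gamma$ for a standard planar convex quadrilateral, divide by $ac+bd+ef>0$ to get nonnegativity, and characterize equality by $\cos\gamma=0$, i.e.\ $\alpha+\beta=\pi$, i.e.\ cyclicity. The exploratory detour through the non-planar inequality is unnecessary (and you rightly abandon it), but the argument you settle on matches the paper.
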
  

 \begin{proof}
     If  the points $ P _1 , \ldots , P _4 $ form a standard convex quadrilateral, then by Theorem \ref{cond:planarity5} we have 

     \[ (ac + b d - e f)= 4 \frac{ abcd \cos ^2 \gamma } {(a c + b d + ef)  }\geq 0. \]
     Let $ \tilde \alpha = \pi - \alpha $, and $ \tilde \beta = \pi - \beta $, the interior angles of the quadrilateral corresponding to $ \alpha $  and $ \beta $, respectively. Recall that the quadrilateral is cyclic if and only if $ \tilde \alpha + \tilde \beta = \pi $. If $ \tilde \gamma = \frac{ \tilde \alpha + \tilde \beta } { 2}$, then   $  \gamma = \frac{ \pi } { 2 }   $, which implies $ \cos \gamma = 0 $. This completes the proof. 

 \end{proof} 

 Note that the theorem also holds in the case the quadrilateral is concave with either $ P _1 $ or $ P _3 $ in the convex hull of the other points. 
\section{Planarity Condition 4 and c.c equations in terms of angles}

From  Planarity Condition \ref{cond:planarity4}, it follows that if we are looking for   planar central configurations      we can impose one of the following two conditions:
\begin{itemize} 
    \item $F_1=\Delta-A_1-A_3=0$ 
    \item $F_2=\Delta-A_2-A_4=0$.
\end{itemize} 

In order to find the extrema of $U$ under  the constraints $I-I_0=0$,  and one of $F_1=0$ and $F_2=0$, let $\lambda,\eta_1$ and $\eta_2$ be Lagrange multipliers, so that we have to find the extrema of one of the two equations 
\begin{equation}
U+\lambda  M(I-I_0)+\eta_k F_k, \quad k = 1,2 
\label{conditionSaari}
\end{equation}
If we choose to use the constraint $ F _1 = 0 $ the condition for a planar extrema is  
\begin{equation}\label{eqn:F1} 
m_im_j\left( \lambda -r_{ij}^{-3}\right)r_{ij}+\eta_1\frac{\partial F_1}{\partial r_{ij}}=0, \qquad  1\leq i<j\leq 4
\end{equation} 
\[I-I_0=0, \qquad F_1 =0,\]
and if we choose to use $ F _2 = 0 $ it is 
\begin{equation} \label{eqn:F2}
m_im_j\left( \lambda -r_{ij}^{-3}\right)r_{ij}+\eta_2\frac{\partial F_2}{\partial r_{ij}}=0, \qquad  1\leq i<j\leq 4
\end{equation} 
\[I-I_0=0, \qquad F_2 =0.\]


 Note that one can also impose both the conditions  $ F _1 = 0 $ and $ F _2 = 0 $, in which case the solutions of the variational problem will be a standard planar convex configurations. 

To compute the partials of $F _k $ ($ k = 1,2 $)   we will use the expression of $\Delta$  given in Corollary \ref{cor:bret}, and expressions of the areas  $A_1,A_2,A_3$ and $A 
_4 $ obtained with  Heron's formula  

\[\begin{split}
A_1&=\frac 1 4 \sqrt{2(b^2c^2+b^2f^2+c^2f^2)-(b^4+c^4+f^4)}\\
A_2&=\frac 1 4 \sqrt{2(c^2d^2+c^2e^2+d^2e^2)-(c^4+d^4+e^4)}\\
A_3&=\frac 1 4 \sqrt{2(a^2d^2+a^2f^2+d^2f^2)-(a^4+d^4+f^4)}\\
A_4&=\frac 1 4 \sqrt{2(a^2b^2+a^2e^2+b^2e^2)-(a^4+b^4+e^4)}.
\end{split}
\]

 Let $ \theta $ be the angle between the diagonal, then  the partial derivatives of $\Delta$ with respect to $e$ and $f$ are:
\[
\frac{\partial \Delta}{\partial e}=\frac{ef^2}{4\Delta}=\frac {f }{2 \sin\theta}=\frac f 2 \csc\theta \qquad \frac{\partial \Delta}{\partial f}=\frac{fe^2}{4\Delta}=\frac {e}{ 2 \sin\theta}=\frac e 2 \csc\theta
\]
where we used that $\Delta=\frac 1 2ef \sin\theta$, an equation for the area of a quadrilateral (see \cite{hobson2004treatise}).
The partials with respect to the remaining mutual distances are
\[\begin{split}
\frac{\partial \Delta}{\partial a} &=\frac{a}{8\Delta}(b^2+d^2-a^2-c^2)=\frac a 2 \cot\theta\\
\frac{\partial \Delta}{\partial b} &=-\frac{b}{8\Delta}(b^2+d^2-a^2-c^2)=-\frac b 2 \cot\theta\\
\frac{\partial \Delta}{\partial c} &=\frac{c}{8\Delta}(b^2+d^2-a^2-c^2)=\frac c 2 \cot\theta\\
\frac{\partial \Delta}{\partial d} &=-\frac{d}{8\Delta}(b^2+d^2-a^2-c^2)=-\frac d 2 \cot\theta,
\end{split}
\]
where we used that $\Delta=\frac 1 4(b^2+d^2-a^2-c^2)\tan\theta$,  another formula for the  area of a quadrilateral (see \cite{hobson2004treatise}).

The partials of $A_1$ and $A_2$ can be computed in the following way. Given a triangle of sides $\alpha,\beta$ and $\gamma$ Heron's formula for the area gives
\[A=\frac 1 4\sqrt{2(\alpha^2\beta^2+\alpha^2\gamma^2+\beta^2\gamma^2)-(\alpha^4+\beta^4+\gamma^4)}.\]
Consequently, by symmetry and the law of cosines,  all the partials of $A$ with respect to the mutual distances have the form
\[
\frac{\partial A}{\partial \alpha} =\frac{\alpha}{8A}(\beta^2+\gamma^2-\alpha^2)=\frac{\alpha}{8A}(2\beta\gamma\cos\delta)
\]
where $\delta$ is the angle between the sides $\beta$ and $\gamma$. As $A=\frac 1 2 \beta\gamma\sin\delta$  we find 
\[
\frac{\partial A}{\partial \alpha}=\frac 1 2\alpha\cot\delta.
\]
Now let $\theta_{ijk}$ be the angle formed by the vertices $\{i,j,k\}$.
From equation \eqref{eqn:F1}, putting all the computations above together, and absorbing the $\frac 1 2$ multiples in the Lagrange multipliers $\eta_1$,  we obtain  the  six equations
\begin{align} 
  r_{12}^{-3}&=\lambda+\frac{\eta _1 }{m_1m_2}[\cot\theta- \cot \theta _{ 142 } ]\label{eqn:angle1}\\
r_{13}^{-3}&=\lambda+\frac{\eta _1 }{m_1m_3}\left[ \frac{ r _{ 24 } } { r _{ 13 } } \csc\theta \right]\label{eqn:angle2}\\
r_{14}^{-3}&=\lambda+\frac{\eta _1 }{m_1m_4}\left[- \cot \theta -\cot \theta_{124}\right]\label{eqn:angle3}\\
r_{23}^{-3}&=\lambda+\frac{\eta _1 }{m_2m_3}\left[-\cot \theta - \cot \theta _{243}\right]\label{eqn:angle4}\\
r_{24}^{-3}&=\lambda+\frac{\eta _1 }{m_2m_4}\left[\frac{ r _{ 13 }} { r _{ 24 } } \csc \theta - \cot \theta _{ 23 4 }- \cot \theta _{ 214 }  \right]\label{eqn:angle5}\\
r_{34}^{-3}&=\lambda+\frac{\eta _1 }{m_3m_4}[\cot \theta - \cot \theta _{ 324 }   ],\label{eqn:angle6}
\end{align} 
together with $I-I_0=0$ and $F _1 =0$.
Similarly, from equation \eqref{eqn:F2} we obtain 
\begin{align} 
  r_{12}^{-3}&=\lambda+\frac{\eta _2 }{m_1m_2}[\cot\theta- \cot \theta _{ 132 } ]\label{eqn:angle1a}\\
r_{13}^{-3}&=\lambda+\frac{\eta _2 }{m_1m_3}\left[ \frac{ r _{ 24 } } { r _{ 13 } } \csc\theta - \cot \theta _{ 143 }- \cot \theta _{ 123 }  \right]\label{eqn:angle2a}\\
r_{14}^{-3}&=\lambda+\frac{\eta _2 }{m_1m_4}\left[- \cot \theta -\cot \theta_{134}\right]\label{eqn:angle3a}\\
r_{23}^{-3}&=\lambda+\frac{\eta _2 }{m_2m_3}\left[-\cot \theta - \cot \theta _{213}\right]\label{eqn:angle4a}\\
r_{24}^{-3}&=\lambda+\frac{\eta _2 }{m_2m_4}\left[\frac{ r _{ 13 }} { r _{ 24 } } \csc \theta \right]\label{eqn:angle5a}\\
r_{34}^{-3}&=\lambda+\frac{\eta _2 }{m_3m_4}[\cot \theta - \cot \theta _{ 314 }   ],\label{eqn:angle6a}
\end{align} 
together with $I-I_0=0$ and $F _1 =0$.

We also mention the well-known  relation of Dziobek \cite{dziobek1900uber}
\begin{equation}\label{eqn:dziobek}
   (r _{ 12 }^{ - 3 }  - \lambda) (r _{ 34 } ^{ - 3 } - \lambda) = (r _{ 13 } ^{  - 3 } - \lambda) (r _{ 24 } ^{ - 3 } - \lambda) = (r _{ 14 } ^{ - 3 } - \lambda) (r _{ 23 } ^{ - 3 } - \lambda)        
\end{equation} 
which is required of any planar 4-body central configuration (see \cite{schmidt2002central} for a derivation). 

\section{Some applications of the new equations}
 We start by recalling the following well known lemma  
 
 \begin{lemma}\label{lem:diag_sides}
     In  a planar  convex central configuration all exterior sides are shorter than the diagonals, and that all the exterior sides are greater than or equal to $ 1/\sqrt[3]{ \lambda } $, and the lengths of all diagonals are greater than or equal to $ 1/\sqrt[3]{ \lambda } $.
 \end{lemma}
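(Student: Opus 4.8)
The plan is to derive the statement from Dziobek's reduction of the central configuration equations together with one elementary inequality for convex quadrilaterals. Since a convex configuration is genuinely two-dimensional and non-degenerate, Planarity Condition 2 furnishes a vector $A=(A_1,A_2,A_3,A_4)$, unique up to a scalar, with $A_1+A_2+A_3+A_4=0$ and $A_1\mathbf q_1+A_2\mathbf q_2+A_3\mathbf q_3+A_4\mathbf q_4=0$. I would first recall the classical fact (see e.g. \cite{schmidt2002central,albouy2006mutual}) that for a planar four-body central configuration equations \eqref{eqn:cc1} are equivalent to the existence of a nonzero constant $\sigma$ with
\[
 m_im_j\bigl(r_{ij}^{-3}-\lambda\bigr)=\sigma\,A_iA_j,\qquad 1\le i<j\le 4,
\]
which is also the relation underlying \eqref{eqn:dziobek}. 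If a self-contained argument is wanted, one forms the symmetric $4\times4$ matrix whose off-diagonal entries are $m_im_j(r_{ij}^{-3}-\lambda)$ and whose rows sum to zero; the central configuration equations say precisely that it annihilates $(1,1,1,1)$ and the two coordinate vectors of the $\mathbf q_i$, so it has rank at most one, and it is not the zero matrix because four coplanar points cannot be pairwise equidistant; a rank-one symmetric matrix is $\sigma vv^{\top}$, and $v$ must then be orthogonal to $(1,1,1,1)$ and to the positions, hence proportional to $A$.

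Next I would extract the sign pattern. For a standard configuration each $A_i$ equals, up to one common nonzero factor, the signed area of the triangle on the three vertices other than $P_i$; as one goes around the quadrilateral these signed areas alternate in sign and none of them vanishes. Hence $A_1A_3>0$ and $A_2A_4>0$, while $A_iA_j<0$ on each of the four exterior sides $\{1,2\},\{2,3\},\{3,4\},\{1,4\}$. Feeding this into the displayed identity, $r_{ij}^{-3}-\lambda$ has one constant sign on all four exterior sides and the opposite constant sign on the two diagonals; only the direction of this dichotomy is still in question.

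This is the one place where convexity is genuinely used. Suppose, towards a contradiction, that $r_{ij}^{-3}-\lambda<0$ on every exterior side and $>0$ on both diagonals; then (using $\lambda>0$) every exterior side is longer than $\lambda^{-1/3}$ and every diagonal is shorter than $\lambda^{-1/3}$, so $r_{13}+r_{24}<2\lambda^{-1/3}<r_{12}+r_{34}$. But in any convex quadrilateral the sum of the diagonals strictly exceeds the sum of either pair of opposite sides: the diagonals meet at an interior point $X$, and applying the triangle inequality in the triangles $P_1XP_2$ and $P_3XP_4$ gives $r_{13}+r_{24}=(|P_1X|+|P_2X|)+(|P_3X|+|P_4X|)>r_{12}+r_{34}$, a contradiction. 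Therefore $r_{ij}^{-3}>\lambda$ on all exterior sides and $r_{ij}^{-3}<\lambda$ on both diagonals, i.e. every exterior side has length at most $\lambda^{-1/3}$ and every diagonal has length at least $\lambda^{-1/3}$; in particular $\lambda^{-1/3}$ separates the two groups and each exterior side is shorter than each diagonal, which is the assertion (with strict inequalities for a non-degenerate configuration).

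I expect the main obstacle to be precisely this last sign determination. Dziobek's reduction and the alternating-sign pattern of the $A_i$ are routine, but ruling out the reversed alternative needs a genuine geometric input; the convex-quadrilateral inequality above does the job, and one could replace it, if preferred, by the remark that if both diagonals were shorter than all four exterior sides then the angle opposite a diagonal in each of the four triangles cut off by that diagonal would be strictly less than $\pi/3$, forcing the four interior angles of the quadrilateral to sum to less than $2\pi$, which is absurd.
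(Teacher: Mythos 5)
The paper does not actually prove this lemma---it simply cites \cite{schmidt2002central}---so there is no internal proof to compare against; your argument is, in essence, the standard one from that reference, and it is correct. All three ingredients check out: (i) the rank-one factorization $m_im_j(r_{ij}^{-3}-\lambda)=\sigma A_iA_j$ (your matrix argument works: the symmetric matrix annihilates the three-dimensional span of $(1,1,1,1)$ and the two coordinate vectors of a genuinely planar configuration, hence has rank at most one, and it is nonzero because four coplanar points cannot be mutually equidistant); (ii) the sign pattern $A_1A_3>0$, $A_2A_4>0$ and $A_iA_j<0$ on the four exterior sides for a sequentially ordered convex configuration; (iii) the elimination of the wrong sign of $\sigma$ via $r_{13}+r_{24}>r_{12}+r_{34}$, which is exactly the geometric input needed and which you prove correctly from the intersection point of the diagonals (your alternative angle-sum argument is also valid). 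This yields the strict dichotomy: exterior sides $<\lambda^{-1/3}<$ diagonals, which in particular gives ``every side shorter than every diagonal.''

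Two remarks. First, you have proved the \emph{correct} statement rather than the printed one: the lemma as stated says the exterior sides are \emph{greater} than or equal to $\lambda^{-1/3}$, but this must be a typo for ``less than or equal to''---your own dichotomy shows that $r_{ij}^{-3}-\lambda$ takes opposite signs on sides and on diagonals, so the two families cannot lie on the same side of $\lambda^{-1/3}$; the version you prove is the one in \cite{schmidt2002central} and the one the paper actually relies on later (e.g.\ in Lemma \ref{lem:eta} and in the step combining this lemma with \eqref{eqn:dziobek} to force a parallelogram to be a rhombus). Second, two small points worth tightening: the signed areas of the four triangles of a convex quadrilateral, each taken with the cyclic orientation induced from the quadrilateral, all have the \emph{same} sign---the alternation of the $A_i$ comes from the cofactor signs in solving $\sum A_i=0$, $\sum A_i\mathbf q_i=0$, not from the areas themselves, so the phrase ``these signed areas alternate in sign'' should be reworded; and you should record why $\lambda>0$ (it equals $U/(2MI_0)$ by Euler's theorem applied to the homogeneous functions $U$ and $I$), since this is needed to pass from $r_{ij}^{-3}\gtrless\lambda$ to $r_{ij}\lessgtr\lambda^{-1/3}$.
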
 
 See \cite{schmidt2002central}  for a proof, and \cite{corbera2016four} for an analogous result for point vortices.
 A consequence of the lemma above is the following useful fact:

\begin{lemma} \label{lem:eta} For any  standard convex planar central configuration $ \eta _1 <0 $. 
\end{lemma}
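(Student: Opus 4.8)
The plan is to extract the sign of $\eta_1$ directly from the central configuration equation attached to a \emph{diagonal}, namely \eqref{eqn:angle2}, in which the factor multiplying $\eta_1$ is manifestly of one sign. Rewriting that equation as
\[
r_{13}^{-3}-\lambda=\frac{\eta_1}{m_1m_3}\,\frac{r_{24}}{r_{13}}\,\csc\theta ,
\]
I would observe that in a standard convex configuration the two diagonals cross at an interior point, so the angle $\theta$ between them lies in $(0,\pi)$ and $\csc\theta>0$; since the masses and the distances $r_{13},r_{24}$ are positive, the coefficient of $\eta_1$ on the right is strictly positive, and hence $\eta_1$ has the same sign as $r_{13}^{-3}-\lambda$.

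Since $r_{13}$ is a diagonal of the standard quadrilateral, Lemma \ref{lem:diag_sides} gives $r_{13}\ge\lambda^{-1/3}$ (recall $\lambda>0$ for a Newtonian central configuration), so that $r_{13}^{-3}-\lambda\le 0$ and therefore $\eta_1\le 0$. To upgrade this to a strict inequality I would rule out $\eta_1=0$: if $\eta_1=0$ then each of \eqref{eqn:angle1}--\eqref{eqn:angle6} collapses to $r_{ij}^{-3}=\lambda$, forcing all six mutual distances to equal $\lambda^{-1/3}$, which is impossible for four coplanar points. Hence $\eta_1\ne 0$, and together with $\eta_1\le 0$ this gives $\eta_1<0$.

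I do not expect a genuine obstacle here; the content is essentially packaged in the angle equations of Section 3 and in Lemma \ref{lem:diag_sides}. The only choice that matters is using the diagonal equation \eqref{eqn:angle2}: in the four side equations \eqref{eqn:angle1}, \eqref{eqn:angle3}, \eqref{eqn:angle4}, \eqref{eqn:angle6} the coefficient of $\eta_1$ is a difference of cotangents of undetermined sign, and \eqref{eqn:angle5} carries such terms as well, whereas \eqref{eqn:angle2} isolates a single positive factor and makes the sign argument immediate.
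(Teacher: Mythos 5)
Your proof is correct and follows essentially the same route as the paper: both isolate $\eta_1$ via the diagonal equation \eqref{eqn:angle2}, use $\csc\theta>0$, and invoke Lemma \ref{lem:diag_sides} for the sign of $r_{13}^{-3}-\lambda$. Your additional step ruling out $\eta_1=0$ (since all six mutual distances equal to $\lambda^{-1/3}$ is impossible for coplanar points) is a welcome refinement, as the paper simply asserts strict negativity of the left-hand side even though the lemma as stated only gives a non-strict inequality for the diagonal.
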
 
\begin{proof} 
    From \eqref{eqn:angle2} we obtain 
    \[
        \frac{ 1 } { r _{ 13 } ^3 } - \lambda = \frac{ \eta _1 } {  m _1 m _3 } \frac{ r _{ 24 } } { r _{ 13 } } \csc \theta.
    \]  
    For any standard convex planar central configuration the left hand side of the equation above  is negative by the previous lemma.  Moreover,  $ \csc \theta >0 $ since $ \theta >0 $. It follows that $ \eta _1 <0 $.
\end{proof} 

We can now give a simple proof of the following proposition (Lemma 3.2 in \cite{cors2012four}) dealing with  co-circular configurations.  
\begin{proposition}\label{prop:trapezoid}
   Consider four bodies forming a concave or a standard convex configuration.  
   If $ m _1 = m _2 $, and $ m _3 = m _4 $, then the corresponding co-circular central configuration must be an isosceles trapezoid.  
\end{proposition}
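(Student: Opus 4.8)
The plan is to combine the angle central-configuration equations \eqref{eqn:angle1}--\eqref{eqn:angle6} with the mass relations $m_1=m_2$, $m_3=m_4$ and with the only genuinely new ingredient, namely that a co-circular quadrilateral is cyclic. First I would dispose of the concave alternative: if one body lies strictly inside the triangle spanned by the other three, it lies strictly inside that triangle's circumscribed circle, so the four bodies cannot be concyclic. Hence a co-circular central configuration is a standard convex quadrilateral inscribed in a circle; being standard convex it satisfies both $F_1=0$ and $F_2=0$, so it obeys \eqref{eqn:angle1}--\eqref{eqn:angle6} with some multiplier $\eta_1$, and by Lemma \ref{lem:eta} we have $\eta_1<0$. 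It then suffices to prove that the diagonals are equal, $r_{13}=r_{24}$: a cyclic quadrilateral with congruent diagonals subtends equal arcs, which forces either $r_{12}=r_{34}$ or $r_{23}=r_{14}$, and in each case one gets a pair of parallel sides together with equal legs, i.e., an isosceles trapezoid.

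The heart of the argument is to subtract equation \eqref{eqn:angle5} from equation \eqref{eqn:angle2}. Because $m_1=m_2$ and $m_3=m_4$ we have $m_1m_3=m_2m_4$, so the $\lambda$-terms cancel and
\[
 r_{13}^{-3}-r_{24}^{-3}=\frac{\eta_1}{m_1m_3}\left[\left(\frac{r_{24}}{r_{13}}-\frac{r_{13}}{r_{24}}\right)\csc\theta+\cot\theta_{234}+\cot\theta_{214}\right].
\]
Here co-circularity enters: $\theta_{214}$ and $\theta_{234}$ are the interior angles of the cyclic quadrilateral at the opposite vertices $P_1$ and $P_3$, so $\theta_{214}+\theta_{234}=\pi$ and the two cotangents cancel. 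The relation reduces to
\[
 r_{13}^{-3}-r_{24}^{-3}=\frac{\eta_1\,\csc\theta}{m_1m_3}\cdot\frac{r_{24}^2-r_{13}^2}{r_{13}\,r_{24}}.
\]

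To conclude I would compare signs. Since $\eta_1<0$, $\csc\theta>0$ (the diagonals of a non-degenerate convex quadrilateral cross at an angle in $(0,\pi)$), and $r_{13},r_{24}>0$, the right-hand side has the same sign as $r_{13}^2-r_{24}^2$, hence as $r_{13}-r_{24}$. But $x\mapsto x^{-3}$ is strictly decreasing on $(0,\infty)$, so the left-hand side has the opposite sign to $r_{13}-r_{24}$. The only way both can hold is $r_{13}=r_{24}$, and then the configuration is an isosceles trapezoid by the observation above.

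I expect the main difficulty to be bookkeeping rather than anything conceptual: one must fix the conventions for the triangle angles $\theta_{ijk}$ appearing in \eqref{eqn:angle2} and \eqref{eqn:angle5} and check that the two cotangent terms left after the subtraction are exactly a pair of opposite interior angles of the quadrilateral (should a different pair appear, one would instead subtract \eqref{eqn:angle5a} from \eqref{eqn:angle2a} in the $F_2$ list, which runs identically). It is also worth recording that the co-circular central configuration does satisfy \eqref{eqn:F1} with a nonzero $\eta_1$: it is a critical point of $U$ restricted to $I=I_0$ among planar configurations, and $\eta_1=0$ would make all six mutual distances equal, i.e., a regular tetrahedron, which is not planar.
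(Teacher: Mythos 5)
Your proof is correct and follows essentially the same route as the paper's: subtract \eqref{eqn:angle5} from \eqref{eqn:angle2}, cancel $\cot\theta_{234}+\cot\theta_{214}$ using the supplementary opposite angles of the cyclic quadrilateral, and use $\eta_1<0$ from Lemma \ref{lem:eta} to force $r_{13}=r_{24}$ by the same sign comparison, concluding via the fact that an equidiagonal cyclic quadrilateral is an isosceles trapezoid. The only real difference is the concave case, which you dispose of directly (an interior point of a triangle cannot lie on its circumcircle, so a concave configuration is never co-circular), whereas the paper invokes an analogous computation with \eqref{eqn:angle2a} and \eqref{eqn:angle5a}; your handling of that point is the cleaner one.
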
 
\begin{proof}
Since the quadrilateral determined by the masses is inscribed in a circle, the angles  $ \theta _{ 234 } $ and $ \theta _{ 214 } $ are supplementary. It follows that $  \cot \theta _{ 23 4 }+ \cot \theta _{ 214 } =0 $. If  $ m _1 = m _2 $, and $ m _3 = m _4 $, subtracting equation \eqref{eqn:angle5} from \eqref{eqn:angle2} yields 
\[
    \frac{ r _{ 24 }  ^3 - r _{ 13 }^3 } { r _{ 13 } ^3 r _{ 24 } ^3 } =  \frac{ \eta _1 } { m _1 m _3 } \frac{ r _{ 24 } ^2 - r _{ 13 } ^2 } { r _{ 13 } r _{ 24 } } \csc \theta. 
\]
Suppose $ r _{ 24 } \neq r _{ 13 } $ , then, since $ \csc \theta >0 $ and $ \eta _1 <0 $ by Lemma \ref{lem:eta}, 
the left and right side of the equation have opposite signs. Consequently,  we must have $ r _{ 13 } = r _{ 24 } $. Since the configuration is on a circle, it follows that $r_{14}=r_{23} $, 
A similar reasoning can be repeated with equation \eqref{eqn:angle5a} from \eqref{eqn:angle2a}, this will exclude the remaining concave cases. It follows that  the configuration is an isosceles trapezoid.
\end{proof} 
We now use the c.c. equations in terms of angles in order to capture an aspect of   Lemma 2.5 in \cite{hampton2014relative}, see also \cite{deng2017four} for a different proof of a similar result.
\begin{proposition} 
Suppose  that  we  have  a  standard convex planar  central configuration  with $ m _1 = m _2 $, and $ m _3 = m _4 $, then 
\begin{equation} \label{eqn:symmetrylemma}
    r _{ 13 } = r _{ 24 }\quad \mbox{ if an only if }\quad r _{ 14 } = r _{ 23 }.
\end{equation} 
In  this  case,  the  configuration  is    an  isosceles  trapezoid  with  bodies $1$ and $2$ on one base,  and $3$ and $4$ on the other.
\end{proposition}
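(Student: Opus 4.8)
The plan is to exploit the mass symmetry $m_1=m_2$, $m_3=m_4$: it gives the coincidences $m_1m_3=m_2m_4$ and $m_1m_4=m_2m_3$, so that in the system \eqref{eqn:angle1}--\eqref{eqn:angle6} (which is available here, the configuration being a standard convex planar central configuration, so that the constraint $F_1=0$ may be used) pairs of the coefficients $\eta_1/(m_im_j)$ agree and can be cancelled by subtracting equations. Throughout, $\eta_1<0$ by Lemma \ref{lem:eta}; in particular $\eta_1\neq 0$. Each subtraction below collapses to a single identity among the angles $\theta_{ijk}$, which I then read off as a metric statement about the quadrilateral.

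For ``$r_{13}=r_{24}\Rightarrow r_{14}=r_{23}$'' I would subtract \eqref{eqn:angle2} from \eqref{eqn:angle5}. With $m_1m_3=m_2m_4$ and $r_{13}=r_{24}$ the two left-hand sides agree, the $\lambda$'s cancel, the two $\csc\theta$ terms are equal and cancel, and (since $\eta_1\neq0$) what survives is
\[
\cot\theta_{234}+\cot\theta_{214}=0 .
\]
Now $\theta_{214}=\angle P_2P_1P_4$ and $\theta_{234}=\angle P_2P_3P_4$ are the interior angles of the convex quadrilateral at the opposite vertices $P_1$ and $P_3$, and both lie in $(0,\pi)$, so they are supplementary; hence the remaining pair of opposite interior angles is supplementary as well, and the quadrilateral is cyclic. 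Thus the configuration is a co-circular central configuration with $m_1=m_2$, $m_3=m_4$, and Proposition \ref{prop:trapezoid} (whose proof in fact produces first $r_{13}=r_{24}$ and then $r_{14}=r_{23}$) shows it is an isosceles trapezoid with $r_{14}=r_{23}$.

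For the converse, ``$r_{14}=r_{23}\Rightarrow r_{13}=r_{24}$'', I would subtract \eqref{eqn:angle4} from \eqref{eqn:angle3}. Because $m_1m_4=m_2m_3$ and $r_{14}=r_{23}$, everything cancels except $\cot\theta_{124}=\cot\theta_{243}$, that is $\theta_{124}=\theta_{243}$, both in $(0,\pi)$. Here $\theta_{124}=\angle P_1P_2P_4$ and $\theta_{243}=\angle P_2P_4P_3$; since the diagonal $P_2P_4$ separates $P_1$ from $P_3$ (the configuration being convex), these are equal alternate interior angles for the lines $P_1P_2$ and $P_3P_4$ cut by $P_2P_4$, whence $P_1P_2\parallel P_3P_4$. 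So the quadrilateral is a trapezoid with bases $P_1P_2$, $P_3P_4$ and equal legs $r_{14}=r_{23}$; placing the two bases on parallel horizontal lines, one checks at once that such a quadrilateral is either an isosceles trapezoid --- and then its diagonals are equal, $r_{13}=r_{24}$ --- or a parallelogram. In the parallelogram case it is classical that a parallelogram central configuration has $m_1=m_3$ and $m_2=m_4$, which together with $m_1=m_2$, $m_3=m_4$ forces all four masses equal; the only convex central configuration of four equal masses is the square, for which $r_{13}=r_{24}$ as well.

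Finally, once both equalities $r_{13}=r_{24}$ and $r_{14}=r_{23}$ are known, the SSS congruences $\triangle P_1P_4P_3\cong\triangle P_2P_3P_4$ and $\triangle P_1P_2P_3\cong\triangle P_2P_1P_4$ give equal interior angles at $P_3,P_4$ and at $P_1,P_2$; since the four interior angles sum to $2\pi$, the angles at $P_2$ and $P_3$ are supplementary, so $P_1P_2\parallel P_3P_4$, and the configuration is an isosceles trapezoid with bodies $1,2$ on one base and $3,4$ on the other. I expect the main obstacles to be the bookkeeping of the angle labels $\theta_{ijk}$ in the two subtractions --- checking that precisely the right cotangents cancel --- and the degenerate parallelogram sub-case, which seems to require either citing the classification of equal-mass convex central configurations or a short direct computation (writing the central configuration equations at $P_1$ and $P_2$ for a rhombus of equal masses and comparing the two values of $\lambda$ forces the diagonals to agree).
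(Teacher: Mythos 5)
Your proof is correct and follows the same skeleton as the paper's: the forward direction subtracts \eqref{eqn:angle5} from \eqref{eqn:angle2} to get $\cot\theta_{234}+\cot\theta_{214}=0$ and hence co-circularity, and the converse subtracts \eqref{eqn:angle4} from \eqref{eqn:angle3} to get $\theta_{124}=\theta_{243}$, parallel sides, and the isosceles-trapezoid-or-parallelogram dichotomy. You differ only in how two sub-steps are discharged. In the forward direction, where the paper re-proves ``cyclic and equidiagonal implies isosceles trapezoid'' from scratch via Ptolemy's second theorem, you instead invoke Proposition \ref{prop:trapezoid}; this is legitimate and shorter, though it leans on the rather terse final sentence of that proposition's proof (the same geometric fact the paper later spells out). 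In the parallelogram sub-case of the converse, the paper stays self-contained: Dziobek's relation \eqref{eqn:dziobek} together with Lemma \ref{lem:diag_sides} forces a rhombus, and then \eqref{eqn:angle1} minus \eqref{eqn:angle6} forces $m_1=m_3$, so all masses are equal and the rhombus is a square by the Long--Sun uniqueness result. You instead appeal to the ``classical'' fact that a parallelogram central configuration has opposite masses equal and to Albouy's classification of equal-mass convex configurations. Both facts are true, but the first deserves either a reference or the two-line derivation the paper actually gives (it follows from \eqref{eqn:angle1}, \eqref{eqn:angle6}, \eqref{eqn:angle3}, \eqref{eqn:angle4} once the rhombus shape is established), and Albouy's theorem is a much heavier hammer than needed here --- you correctly anticipated this as the weak point. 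Your closing SSS-congruence argument for the final ``isosceles trapezoid with $1,2$ on one base'' claim is a nice explicit touch that the paper leaves implicit.
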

\begin{proof}
   Suppose $ r _{ 13 } = r _{ 24 } $, then the quadrilateral is equidiagonal and 
    \begin{align*} 0& =r_{13} ^{-3}-r _{ 24 } ^{-3} =\frac{\eta _1 }{m_1m_3}\left[ \frac{ r _{ 24 } } { r _{ 13 } } \csc\theta \right]
-\frac{\eta _1 }{m_2m_4}\left[\frac{ r _{ 13 }} { r _{ 24 } } \csc \theta - \cot \theta _{ 23 4 }- \cot \theta _{ 214 }  \right]\\
& = \frac{\eta _1 }{m_2m_4}\left[ \cot \theta _{ 23 4 }+ \cot \theta _{ 214 }  \right].
\end{align*} 
Hence, $ \cot \theta _{ 23 4 }+ \cot \theta _{ 214 } =0 $ and $ \theta _{ 234 } $ and $ \theta _{ 214 } $ are supplementary. The configuration is co-circular.
However, if a  cyclic quadrilateral is also equidiagonal, it is an isosceles trapezoid.
This can be proved as follows. By Ptolemy's second theorem 
\[
    \frac{ e } { f } = \frac{ a d +  b c } { a b + c d } .
\]
Since $ e = f $, then $ a d +  b c = a b + c d $, which imply that $ (a - c) (b - d) = 0  $.  The last equality has the two possible solutions $a=c $ and $ b = d $. Moreover, if in a  cyclic quadrilateral  a pair of opposite  sides is congruent then the other sides must be parallel. For instance, if $b  = d $, since  equal chords subtend equal angles at the circumference of the circle, we have that $ \theta _{ 213 } = \theta _{ 134 } $. Consequently, $ \mathbf{c} $ is parallel to $ \mathbf{a} $. The  case  $a=c $ is similar. It follows that the quadrilateral is a rhombus or an isosceles trapezoid with $ r _{ 14 } = r _{ 23 } $. But the only cyclic rhombus is a square. Thus, in both cases  $ b = d $, that is  $ r _{ 14 } = r _{ 23 } $.



Conversely assume that $ r _{ 14 } = r _{ 23 } $, then 
\begin{align*} 
0=r_{14}^{-3}-r_{23}^{-3}= \frac{\eta _1 }{m_1m_4}[-\cot \theta_{124}+\cot \theta _{243}]\\
\end{align*}
and $  \theta_{124}= \theta _{243} $. It follows that two of the opposite sides are parallel, that is, $ \mathbf{a} $ is parallel to $ \mathbf{c} $. Hence, the configuration is either an isosceles trapezoid  or a parallelogram.

If it is an isosceles trapezoid then the diagonals have equal length, that is, $ r _{ 13 } = r _{ 24 } $, and we are done. 

If it is a parallelogram opposite sides are equal in length, that is, $ r _{ 14 } = r _{ 23 } $, and  $ r _{ 12 } = r _{ 34 } $. From equation \eqref{eqn:dziobek} and Lemma \ref{lem:diag_sides} it follows that all the external sides must have equal length. The quadrilateral is then  a rhombus. 
Moreover, opposite sides are parallel, and hence $  \theta _{ 142 } =\theta _{ 324 } $. From equation
\eqref{eqn:angle1} and \eqref{eqn:angle6} we obtain
\[
  0 =  r _{ 12 } ^{ - 3 } - r _{ 34 } ^{ - 3 } = \eta _1 
  \left[ \frac{ 1 } { m _1  ^2} - \frac{ 1 } { m _3  ^2} 
  \right] (\cot \theta - \cot \theta _{ 142 } ).
\]
 Since $ \cot \theta - \cot \theta _{ 142 } \neq 0$, it follows that $  m _1 = m _2 = m _3 = m _4 $. In this case the rhombus reduces to a square (this follows from the uniqueness of rhombus configurations, see  \cite{long2002four} for a proof), and thus $ r _{ 13 } = r _{ 24 } $.


\end{proof}
Another aspect of   Lemma 2.5 in \cite{hampton2014relative}, is captured in the following proposition.
\begin{proposition}
    Suppose  that  we  have  a  standard convex planar  central configuration  with $ m _1 = m _3 $, and $ m _2 = m _4 $, then 
\begin{equation} \label{eqn:symmetrylemma1}
    r _{ 12 } = r _{ 34 }\quad \mbox{ if an only if }\quad r _{ 14 } = r _{ 23 }.
\end{equation} 
  In  this  case,  the  configuration  is is a rhombus with bodies $1$ and $3$
opposite each other.
\end{proposition}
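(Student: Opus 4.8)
The plan is to mirror the structure of the proof of the previous proposition, but now exploiting the symmetry $m_1=m_3$, $m_2=m_4$ (opposite masses equal) rather than adjacent-pair equality. The key algebraic inputs are the six angle equations \eqref{eqn:angle1}--\eqref{eqn:angle6} (or their primed counterparts), the Dziobek relations \eqref{eqn:dziobek}, Lemma \ref{lem:diag_sides}, and Lemma \ref{lem:eta} which gives $\eta_1<0$. First I would prove the forward direction: assume $r_{12}=r_{34}$. Subtracting \eqref{eqn:angle6} from \eqref{eqn:angle1} and using $m_1=m_3$, $m_2=m_4$ (so the coefficients $1/(m_1m_2)$ and $1/(m_3m_4)$ coincide), the left-hand side vanishes, leaving
\[
0 = \frac{\eta_1}{m_1m_2}\big[\,2\cot\theta - \cot\theta_{142} - \cot\theta_{324}\,\big].
\]
Since $\eta_1\neq 0$, this forces $\cot\theta_{142}+\cot\theta_{324} = 2\cot\theta$. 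The geometric content I want to extract from this is that the quadrilateral has $\mathbf a$ parallel to $\mathbf c$; this should follow because $\theta_{142}$ and $\theta_{324}$ are the angles the two diagonals make with the sides $\mathbf a$ and $\mathbf c$ at the vertices $P_1$ and $P_3$, and the relation with $2\cot\theta$ (twice the cotangent of the diagonal-intersection angle) is exactly the condition that these two sides be parallel. Once $\mathbf a\parallel\mathbf c$, the configuration is either an isosceles trapezoid or a parallelogram, and an isosceles trapezoid with $\mathbf a\parallel\mathbf c$ has $r_{14}=r_{23}$.

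To finish the forward direction I must rule out the parallelogram case, or rather show it still yields $r_{14}=r_{23}$. If the quadrilateral is a parallelogram then $r_{14}=r_{23}$ and $r_{12}=r_{34}$ automatically hold, and Dziobek's relation \eqref{eqn:dziobek} together with Lemma \ref{lem:diag_sides} forces all exterior sides equal, so it is a rhombus; then an argument like the one at the end of the previous proof --- using the symmetry of opposite angles and one more pair of the angle equations --- forces all four masses equal, reducing the rhombus to a square (invoking uniqueness of rhombus central configurations, \cite{long2002four}) in which $r_{14}=r_{23}$ trivially. The converse direction is symmetric: assuming $r_{14}=r_{23}$ and subtracting the appropriate pair of equations (the analogues of \eqref{eqn:angle3} and \eqref{eqn:angle4} that involve $r_{14}$ and $r_{23}$) yields an angle identity forcing $\mathbf a\parallel\mathbf c$ again, hence isosceles trapezoid or parallelogram, and the same case analysis gives $r_{12}=r_{34}$. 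Finally, to conclude that the configuration is a rhombus with $1$ and $3$ opposite, I observe that under $m_1=m_3$, $m_2=m_4$ the only way both $r_{12}=r_{34}$ and $r_{14}=r_{23}$ hold in a standard convex configuration with $\mathbf a\parallel\mathbf c$ is a parallelogram, which by the Dziobek-plus-Lemma-\ref{lem:diag_sides} argument above is in fact a rhombus; the labelling puts $m_1$ and $m_3$ at opposite vertices because those are the equal masses sharing no exterior side.

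The main obstacle I anticipate is the bookkeeping in the step "$\cot\theta_{142}+\cot\theta_{324}=2\cot\theta$ implies $\mathbf a\parallel\mathbf c$": one has to be careful about which triangle's angle each $\theta_{ijk}$ denotes and about the sign conventions for the cotangents in a convex quadrilateral (the angle $\theta$ between the diagonals and the angles the diagonals subtend with the sides are related by an exterior-angle identity), and this needs to be made rigorous rather than read off the picture. A clean way around it is to avoid the parallelism interpretation entirely and instead argue purely algebraically: combine the relevant differences of angle equations with the Dziobek relations \eqref{eqn:dziobek} to derive directly that two mutual distances are equal, the way Proposition \ref{prop:trapezoid} extracted $r_{13}=r_{24}$ without ever invoking parallelism. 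If that route closes, the rest of the proof is routine and parallels the previous two propositions closely.
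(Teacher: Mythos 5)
Your overall strategy (subtract pairs of the angle equations, deduce a parallelism, split into isosceles trapezoid versus parallelogram, and use Dziobek plus Lemma \ref{lem:diag_sides} to upgrade the parallelogram to a rhombus) is the same as the paper's, but the execution of the key first step is wrong, and the error hides the one genuinely nontrivial case. When you subtract \eqref{eqn:angle6} from \eqref{eqn:angle1} with $m_1m_2=m_3m_4$, the two $\cot\theta$ terms \emph{cancel} (both brackets contain $+\cot\theta$), and the remaining cotangents enter with opposite signs; the correct conclusion is simply $\cot\theta_{142}=\cot\theta_{324}$, hence $\theta_{142}=\theta_{324}$. Your version, $\cot\theta_{142}+\cot\theta_{324}=2\cot\theta$, is an addition rather than a subtraction, and the "exterior-angle identity" you would then need to interpret it as a parallelism condition does not exist; the obstacle you flag at the end is an artifact of this sign error. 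Moreover, $\theta_{142}$ and $\theta_{324}$ are alternate angles across the diagonal $P_2P_4$ made with the sides $\mathbf d$ and $\mathbf b$, so their equality gives $\mathbf d\parallel\mathbf b$ (sides $P_4P_1$ and $P_2P_3$), \emph{not} $\mathbf a\parallel\mathbf c$ as you claim.

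This misidentification is not cosmetic: with the correct parallelism, in the isosceles-trapezoid case the equal legs are $a$ and $c$ (which you already know from the hypothesis $r_{12}=r_{34}$), while $r_{14}$ and $r_{23}$ are the two \emph{parallel} sides of the trapezoid, and parallel sides of an isosceles trapezoid are not equal in general. So the conclusion $r_{14}=r_{23}$ does not come for free here; this is exactly the case the paper has to work to close. It does so by noting the isosceles trapezoid is cyclic and equidiagonal, so $r_{13}=r_{24}$, and then \eqref{eqn:angle2} and \eqref{eqn:angle5} collapse to $0=\eta_1\csc\theta\bigl(1/m_1^2-1/m_2^2\bigr)$, forcing $m_1=m_2$, hence all four masses equal, and then uniqueness results force the configuration to be a square (where $r_{14}=r_{23}$ holds trivially). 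Your proposal has no counterpart of this step. Finally, your fallback plan of arguing "purely algebraically as in Proposition \ref{prop:trapezoid}" does not transfer: that argument used co-circularity to kill the term $\cot\theta_{234}+\cot\theta_{214}$ in \eqref{eqn:angle5}, and no co-circularity hypothesis is available here. The converse direction and the parallelogram/rhombus bookkeeping in your sketch are essentially right and match the paper.
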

\begin{proof}
    Suppose that $ m _1 = m _3 $, $ m _2 = m _4 $, and $ r _{ 12 } = r _{ 34 } $, then from equations \eqref{eqn:angle1} and \eqref{eqn:angle6}, we obtain 
    \begin{align*}
       0 = r _{ 12 } ^{ - 3 } - r _{ 34 } ^{ - 3 } = \frac{ \eta _1 } { m _1 m _2 } \left(- \cot \theta _{ 142 } + \cot \theta _{ 324 } \right).  
    \end{align*} 
The only solution of this equation with  $0<\theta _{ 324 }, \theta _{ 142 } <\pi $ is $\theta _{ 324 }= \theta _{ 142 } $.

From this it follows that 
$\mathbf{d} $ and $\mathbf{b} $ are parallel. Hence, the configuration is either an isosceles trapezoid  or a parallelogram.

If it is an isosceles trapezoid it must be a cyclic quadrilateral (and thus $ \cot \theta _{ 234 } + \cot \theta _{ 214 } = 0 $) and have $ r _{ 12 } = r _{ 34 } $ and have diagonals of equal length, that is, 
$ r _{ 13 } = r _{ 24 } $.
Then, from equation \eqref{eqn:angle2} and \eqref{eqn:angle5}
\begin{equation} \label{eqn:masses} 
    0 = r _{ 13 } ^{ - 3 } - r _{ 24 } ^{ - 3 } = \eta _1 \csc \theta \left( \frac{ 1 } { m _1 ^2 } - \frac{ 1 } { m _2 ^2  } \right).
\end{equation} 
Thus $ m _1 = m _2 $ and all the masses are equal. In this case the isosceles trapezoid reduces to a square (this follows, for instance, from the uniqueness of isosceles trapezoid c.c.'s \cite{xie2012isosceles}, or from Albouy's classification of four-body c.c's with equal masses, \cite{albouy1995symetrie,albouy1996symmetric}, see also section 7.1 in  \cite{cors2012four} for the vortex case), and thus $ r _{ 14 } = r _{ 23 } $.

In the case of the parallelogram opposite sides are equal in length, that is, $ r _{ 12 } = r _{ 34 } $ and $ r _{ 14 } = r _{ 23 } $. 
From equation \eqref{eqn:dziobek} and Lemma \ref{lem:diag_sides} it follows that all the external sides must have equal length. The quadrilateral is then  a rhombus.  


Conversely suppose that $  r _{ 14 } = r _{ 23 } $. Then, from equation \eqref{eqn:angle3} and \eqref{eqn:angle4}
we obtain 
\[   0 = r _{ 14} ^{ - 3 } - r _{ 23 } ^{  - 3 } =  \frac{ \eta _1 } { m _1 m _4 } (-\cot \theta _{ 124 } + \cot 
    \theta_{ 243 }).
\] 
It follows that  $ \theta _{ 124 } = \theta _{ 243 } $ and thus  
$\mathbf{a} $ and $\mathbf{c} $ are parallel. The configuration is either an isosceles trapezoid  or a parallelogram.
In the former case  $ r _{ 14 } = r _{ 23 } $, and $ r _{ 13 } = r _{ 24 } $.  Using equation \eqref{eqn:masses} we find that all the masses are equal, and hence the configuration must be a square. 
In the latter case  $ r _{ 14 } = r _{ 23 } $, and $ r _{12 } = r _{ 34 } $. Reasoning as we did before we find the configuration must be a rhombus. In  either  case  we
deduce that $ r _{ 12 } = r _{ 34 } $.  This completes the proof.

\end{proof}
\begin{remark} We conclude this section by noticing that the equations (\ref{eqn:angle1}-\ref{eqn:angle6}) and 
    (\ref{eqn:angle1a}-\ref{eqn:angle6a}) simplify considerably in two important cases.

    If the configuration is co-circular then $ \cot _{ 234 } + \cot _{ 214 } = 0 $ and $ \cot \theta _{ 143 } + \cot \theta _{ 123 } = 0 $, simplifying equations \eqref{eqn:angle5} and \eqref{eqn:angle2a}.

    If the diagonal are perpendicular then $ \theta = \frac{ \pi } { 2 } $, $ \cot \theta = 0 $, and  $\csc \theta = 1 $. In this case the equations (\ref{eqn:angle1}-\ref{eqn:angle6}) and 
    (\ref{eqn:angle1}-\ref{eqn:angle6}) take a simpler form.

It is hoped  that equations (\ref{eqn:angle1}-\ref{eqn:angle6}) and 
    (\ref{eqn:angle1a}-\ref{eqn:angle6a}) for four-body c.c.'s can be helpful in proving further results in the co-circular case, and in the case the diagonals are perpendicular. In the first case it would be interesting, for instance, to  try to reproduce results obtained in \cite{cors2012four}. In the latter case it would be intriguing to try to recover results obtained in \cite{deng2017some} and \cite{corbera2016four}.
\end{remark} 
\section{A remark on co-circular configurations}\label{sec:cocircular}
Planarity Condition \ref{cond:planarity5} can also be used to obtain equations for four-body c.c.'s with distances as variables. Although we will not derive such equations here, we will use  Planarity Condition  \ref{cond:planarity5} to explain why  using Ptolemy's condition as a constraint (as it was done by Cors and Roberts in \cite{cors2012four})) is enough to obtain equations for co-circular c.c.'s with distances as variables. Our result is similar to  Lemma 2.1 in \cite{cors2012four}.
Dividing  equation \eqref{eqn:planarity5} by $ (ac + b d +e f) $ yields the condition
\[ 
G=P- Q \cos ^2 \gamma=0 
\]
where 
\[P = ac + b d - e f, \mbox{ and } Q = 4\frac{ abcd}{a c + b d + ef}.\]
The planarity condition is now $ G = 0 $, and central configurations can be viewed as the critical points of $ U $ with the constraints $ I -I _0 = 0 $ and $ G = 0 $.
If $ \lambda $ and $ \sigma $ are Lagrange multipliers this means that we must find the extrema of 
\begin{equation} \label{eqn:G}
    U + \lambda M (I - I _0) + \sigma G 
\end{equation} 
satisfying $ I -I _0 = 0 $ and $ G = 0 $. Let $ \mathbf{r} = (r _{ 12 } , r _{ 13 } , r _{ 14 } , r _{ 23 } , r _{ 24 } , r _{ 34 }) $, and let $ \nabla _{ \mathbf{r} } = \left(\frac{ \partial } { \partial r _{ 12 } }, \ldots , \frac{ \partial } { \partial r _{ 34 } } \right)  $.
Setting the gradient of equation \eqref{eqn:G} equal to zero yields the equations
\begin{equation}\label{eqn:gradG} 
    \nabla _{ \mathbf{r} } U + \lambda M \,\nabla _{ \mathbf{r} } I + \sigma \,\nabla _{ \mathbf{r} } G = 0 
\end{equation} 
satisfying the constraints $ I -I _0 = 0 $ and $ G = 0 $.

Let $r_{ij}$ be one of the mutual distances, then the derivative will have the following form 

\begin{align*} 
    \frac{ \partial G } { \partial r_{ij} } & = \frac{ \partial P } { \partial r_{ij} } -\frac{ \partial Q }{ \partial r_{ij} } \cos ^2 \gamma +2 Q \sin \gamma \cos \gamma \frac{ \partial \gamma } { \partial r_{ij} } \\
    & = \frac{ \partial P } { \partial r_{ij} } +\cos \gamma \left(2 (Q \sin \gamma)  \frac{ \partial \gamma } { \partial r_{ij} } -(\cos \gamma) \frac{ \partial Q }{ \partial r_{ij} }  \right)\\
    & = \frac{ \partial P } { \partial r_{ij} } +\cos \gamma \left( L \sin (\gamma - \phi)   \right) .
\end{align*}
where we defined two functions $ L$ and $ \phi $ such that $ L \cos \phi = 2Q \frac{ \partial \gamma } { \partial r_{ij} } $, $ L \sin \phi = \frac{ \partial Q } { \partial r_{ij} } $, and $ L = \sqrt{ \left(2Q \frac{ \partial \gamma } { \partial r_{ij} } \right)^2 +\left(\frac{ \partial Q } { \partial r_{ij} } \right)^2} $ .
If we restrict the problem to co-circular configurations then $ \cos \gamma = 0 $, and the derivatives of $G$  coincide with those of $P$. We have proven the following lemma.
\begin{lemma}
   Let $ \mathbf{r} $ be any  standard planar convex configurations satisfying $ P (\mathbf{r}) = 0 $, then 
   \[
       \nabla _{ \mathbf{r} } G (\mathbf{r}) = \nabla _{ \mathbf{r} } P (\mathbf{r}).  
   \] 
   In other words, on a configuration for  which both $G $ and $P $ vanish, the
gradients of these two functions are equal.
\end{lemma}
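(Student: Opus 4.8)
The plan is to reduce everything to the single observation that on such a configuration $\cos\gamma = 0$, after which the claimed identity falls out of the formula for the partial derivatives of $G$ derived just above the lemma. To pin down $\cos\gamma$: since $\mathbf{r}$ is a standard planar convex configuration, Planarity Condition \ref{cond:planarity5} gives $F(\mathbf{r}) = 0$, and hence $G(\mathbf{r}) = P(\mathbf{r}) - Q(\mathbf{r})\cos^2\gamma = 0$ after dividing by the strictly positive quantity $ac+bd+ef$. Together with the hypothesis $P(\mathbf{r}) = 0$, this forces $Q(\mathbf{r})\cos^2\gamma = 0$; and because the configuration is non-degenerate we have $a,b,c,d > 0$ and $ac+bd+ef > 0$, so $Q = 4abcd/(ac+bd+ef) \neq 0$ and therefore $\cos\gamma = 0$ at $\mathbf{r}$. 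Equivalently, one may quote the Ptolemy Inequality corollary: a standard convex quadrilateral with $P = 0$ is cyclic, and cyclicity is exactly the condition $\cos\gamma = 0$.

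Next I would invoke the expression established immediately before the lemma,
\[
\frac{\partial G}{\partial r_{ij}} = \frac{\partial P}{\partial r_{ij}} + \cos\gamma\,\bigl(L\sin(\gamma-\phi)\bigr),
\]
which holds for each of the six mutual distances $r_{ij}$, with $L$ and $\phi$ built from $Q$, $\partial Q/\partial r_{ij}$ and $\partial\gamma/\partial r_{ij}$. On a non-degenerate standard convex configuration all of these auxiliary quantities are finite — $abcd$ and $ac+bd+ef$ stay bounded away from $0$, and $\gamma$ is a smooth function of the mutual distances near $\mathbf{r}$ — so the second term is a genuine finite multiple of $\cos\gamma$. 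Evaluating at $\mathbf{r}$, where $\cos\gamma = 0$, that term vanishes for every $r_{ij}$, giving $\partial G/\partial r_{ij}(\mathbf{r}) = \partial P/\partial r_{ij}(\mathbf{r})$ for $1 \le i < j \le 4$; collecting the six equalities yields $\nabla_{\mathbf{r}} G(\mathbf{r}) = \nabla_{\mathbf{r}} P(\mathbf{r})$.

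The main obstacle is not a deep one: it is simply checking that the correction term in $\partial G/\partial r_{ij}$ truly isolates the factor $\cos\gamma$ with a bounded coefficient, i.e.\ that nothing in $L$ or $\phi$ degenerates at a cyclic standard convex quadrilateral. The non-degeneracy built into the definition of a standard configuration takes care of this, so once $\cos\gamma = 0$ is in hand the conclusion is immediate.
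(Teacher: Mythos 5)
Your proposal is correct and follows essentially the same route as the paper: the paper's ``proof'' is precisely the computation of $\partial G/\partial r_{ij}$ displayed just before the lemma, followed by the observation that the correction term carries a factor of $\cos\gamma$, which vanishes on co-circular configurations. Your explicit derivation of $\cos\gamma = 0$ from $F(\mathbf{r})=0$ (via Planarity Condition~\ref{cond:planarity5}) together with $P(\mathbf{r})=0$ and $Q>0$ fills in a step the paper leaves implicit, but it is the same argument.
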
 
Consequently,  on co-circular configurations, equation \eqref{eqn:gradG} takes the form 
\[
\nabla _{ \mathbf{r} } U + \lambda M \,\nabla _{ \mathbf{r} } I + \sigma \,\nabla _{ \mathbf{r} } P = 0, 
\]
with the constraints $ I -  I _0 = 0 $, $ G = 0 $, and $ P = 0 $.
This gives another explanation of why   it is possible to use $P$ instead of the Caley-Menger determinant to study co-circular configurations.

\section*{Acknowledgements}
This article had a long gestation before reaching the current form. Donald G. Saari first encouraged me to write c.c.'s equations in terms of angles while I was a Visiting Assistant Professor at UC Irvine for the period 2003-2006. I would like  to thank Alain Albouy  and  Cristina Stoica for helpful comments on an early draft of this work. I am grateful to Shengda Hu for  helpful discussions. This work was supported by an NSERC Discovery grant.  
\bibliographystyle{amsplain}
\bibliography{references}

\providecommand{\bysame}{\leavevmode\hbox to3em{\hrulefill}\thinspace}
\providecommand{\MR}{\relax\ifhmode\unskip\space\fi MR }
\providecommand{\MRhref}[2]{%
  \href{http://www.ams.org/mathscinet-getitem?mr=#1}{#2}
}
\providecommand{\href}[2]{#2}
\begin{thebibliography}{10}

\bibitem{albouy1995symetrie}
Alain Albouy, \emph{Sym{\'e}trie des configurations centrales de quatre corps},
  Comptes rendus de l'Acad{\'e}mie des sciences. S{\'e}rie 1, Math{\'e}matique
  \textbf{320} (1995), no.~2, 217--220.

\bibitem{albouy1996symmetric}
\bysame, \emph{The symmetric central configurations of four equal masses},
  Contemporary Mathematics \textbf{198} (1996), 131--136.

\bibitem{albouy2006mutual}
\bysame, \emph{Mutual distances in celestial mechanics}, Nelineinaya Dinamika
  [Russian Journal of Nonlinear Dynamics] \textbf{2} (2006), no.~3, 361--386.

\bibitem{albouy2008symmetry}
Alain Albouy, Yanning Fu, and Shanzhong Sun, \emph{Symmetry of planar four-body
  convex central configurations}, Proceedings of the Royal Society of London A:
  Mathematical, Physical and Engineering Sciences \textbf{464} (2008),
  no.~2093, 1355--1365.

\bibitem{behnke1974fundamentals}
Heinrich Behnke, \emph{Fundamentals of mathematics: Geometry}, vol.~2, Mit
  Press, 1974.

\bibitem{corbera2016four}
Montserrat Corbera, Josep~M Cors, and Gareth~E Roberts, \emph{A four-body
  convex central configuration with perpendicular diagonals is necessarily a
  kite}, Qualitative Theory of Dynamical Systems (2016), 1--8.

\bibitem{cors2012four}
Josep~M Cors and Gareth~E Roberts, \emph{Four-body co-circular central
  configurations}, Nonlinearity \textbf{25} (2012), no.~2, 343.

\bibitem{deng2017four}
Yiyang Deng, Bingyu Li, and Shiqing Zhang, \emph{Four-body central
  configurations with adjacent equal masses}, Journal of Geometry and Physics
  \textbf{114} (2017), 329--335.

\bibitem{deng2017some}
\bysame, \emph{Some notes on four-body co-circular central configurations},
  Journal of Mathematical Analysis and Applications \textbf{453} (2017), no.~1,
  398--409.

\bibitem{dziobek1900uber}
Otto Dziobek, \emph{Uber einen merkw{\"u}rdigen fall des
  vielk{\"o}rperproblems}, Astron. Nach \textbf{152} (1900), 33--46.

\bibitem{hampton2014relative}
Marshall Hampton, Gareth~E Roberts, and Manuele Santoprete, \emph{Relative
  equilibria in the four-vortex problem with two pairs of equal vorticities},
  Journal of Nonlinear Science \textbf{24} (2014), no.~1, 39--92.

\bibitem{hobson2004treatise}
E.W. Hobson, \emph{A treatise on plane and advanced trigonometry}, Dover Books
  on Mathematics Series, Dover Publications, 2004.

\bibitem{long2002four}
Yiming Long and Shanzhong Sun, \emph{Four-body central configurations with some
  equal masses}, Archive for rational mechanics and analysis \textbf{162}
  (2002), no.~1, 25--44.

\bibitem{moeckel2001generic}
Richard Moeckel, \emph{{Generic finiteness for Dziobek configurations}},
  Transactions of the American Mathematical Society \textbf{353} (2001),
  no.~11, 4673--4686.

\bibitem{perez2007convex}
Ernesto Perez-Chavela and Manuele Santoprete, \emph{Convex four-body central
  configurations with some equal masses}, Archive for rational mechanics and
  analysis \textbf{185} (2007), no.~3, 481--494.

\bibitem{saari2005collisions}
Donald Saari, \emph{{Collisions, rings, and other Newtonian N-body problems}},
  no. 104, American Mathematical Soc., 2005.

\bibitem{schmidt2002central}
Dieter Schmidt, \emph{Central configurations and relative equilibria for the
  n-body problem}, Classical and Celestial Mechanics (Recife, 1993/1999)
  (2002), 1--33.

\bibitem{xie2012isosceles}
Zhifu Xie, \emph{Isosceles trapezoid central configurations of the newtonian
  four-body problem}, Proceedings of the Royal Society of Edinburgh Section A:
  Mathematics \textbf{142} (2012), no.~3, 665--672.

\end{thebibliography}

\end{document}